\newtheorem{definition}{Definition}
\newtheorem{lemma}{Lemma}
\DeclareRobustCommand\onedot{\futurelet\@let@token\@onedot}
\def\@onedot{\ifx\@let@token.\else.\null\fi\xspace}
\def\eg{\emph{e.g}\onedot} 
\def\ie{\emph{i.e}\onedot}
\title{Two New Upper Bounds for the Maximum $k$-plex Problem}
\author{
    Jiongzhi Zheng\equalcontrib \and  
    Mingming Jin\equalcontrib \and    
    Kun He\thanks{Corresponding author.}
    \affiliations
    School of Computer Science and Technology, Huazhong University of Science and Technology, China
    \\
    \emails
    brooklet60@hust.edu.cn
}
\begin{document}

\maketitle

\begin{abstract}

   A $k$-plex in a graph is a vertex set where each vertex is non-adjacent to at most $k$ vertices (including itself) in this set, and the Maximum $k$-plex Problem (MKP) is to find the largest $k$-plex in the graph. 
   As a practical NP-hard problem, MKP has many important real-world applications, such as the analysis of various complex networks. Branch-and-bound (BnB) algorithms are a type of well-studied and effective exact algorithms for MKP. Recent BnB MKP algorithms involve two kinds of upper bounds based on graph coloring and partition, respectively, that work in different perspectives and thus are complementary with each other. In this paper, we first propose a new coloring-based upper bound, termed Relaxed Graph Color Bound (RelaxGCB), that significantly improves the previous coloring-based upper bound. We further propose another new upper bound, termed RelaxPUB, that incorporates RelaxGCB and a partition-based upper bound in a novel way, making use of their complementarity. We apply RelaxGCB and RelaxPUB to state-of-the-art BnB MKP algorithms and produce eight new algorithms. 
   Extensive experiments using diverse $k$ values on hundreds of instances based on dense and massive sparse graphs 
   demonstrate the excellent performance and robustness of our proposed methods. 
\end{abstract}

\section{Introduction}
\label{sec-Intro}
Given an undirected graph $G = (V, E)$, a clique is a set of vertices that are pairwise adjacent, and a $k$-plex~\cite{SF78} is a set of vertices $S \subseteq V$ where each vertex $v \in S$ is non-adjacent to at most $k$ vertices (including $v$ itself) in $S$. The Maximum Clique Problem (MCP) is to find the largest clique in $G$, while the Maximum $k$-plex Problem (MKP) is to find the largest $k$-plex in $G$.

MCP is a famous and fundamental NP-hard problem, and the clique model has been widely investigated in the past decades. However, in many real-world applications, such as social network mining~\cite{SF78,PYB13,CMS+18,ZCZ20,WangLHH23TKDD} 
and biological network analysis~\cite{GJM+20}, dense subgraphs need not to be restrictive 
cliques but allow missing a few connections. Therefore, investigating relaxation clique structures like $k$-plex is significant, and studies related to $k$-plex have sustainably grown in recent decades~\cite{BBH11,McCloskyH12,Berlowitz2015,ConteFMPT2017,WangZXK22}.

Many efficient exact methods for the NP-hard MKP have been proposed~\cite{XLD+17,GCY+18,ZHX+21,JZX+21,Chang2022kPlexS,wang2023fast,jiang2023Dise}, resulting in various effective techniques, such as reduction rules, upper bounds, inprocessing methods, etc. 
Most of these studies follow the branch-and-bound (BnB) framework~\cite{LawlerW66,LiQ10,McCreeshPT17}, and their performance heavily depends on the quality of the upper bounds.



A BnB MKP algorithm usually maintains the current growing partial $k$-plex $S \subseteq V$ and the corresponding candidate vertex set $C \subseteq V \backslash S$. Methods for calculating the upper bound on the number of vertices that $C$ can provide for $S$ in existing BnB MKP algorithms can be divided into two categories. The first calculates the upper bound by considering the connectivity between vertices in $C$ only, 
such as the graph color bound (GCB) proposed in the Maplex algorithm~\cite{ZHX+21}. The second considers the connectivity between vertices in $C$ and vertices in $S$, including the partition-based upper bounds (PUB) proposed in the KpLeX~\cite{JZX+21} 
algorithm and also used in the kPlexS~\cite{Chang2022kPlexS} and KPLEX~\cite{wang2023fast} algorithms.

In this work, we observe that 
the upper bounds of the above algorithms are still not very tight. For a graph $G$, an independent set $I$ is a subset of $V$ where any two vertices are non-adjacent. Graph coloring assigns a color to each vertex such that adjacent vertices are in different colors, which is widely used for finding independent sets in graphs. 
GCB~\cite{ZHX+21} claims that an independent set $I \subseteq C$ can provide at most $\min\{|I|,k\}$ vertices for $S$, which actually ignores the connectivity between vertices in $I$ and vertices in $S$. 
While PUB~\cite{JZX+21} simply regards $C$ as a clique. Also, due to different motivations of the two kinds of upper bounds, they show complementary performance in various instances, as indicated in our follow-up examples and experiments.

To this end, we propose a new upper bound based on graph coloring called Relaxed Graph Color Bound (RelaxGCB). 
RelaxGCB first calculates an upper bound for each independent set $I \subseteq C$ that is strictly no worse than GCB by considering the connectivity between not only vertices in $I$ themselves but also vertices in $I$ and vertices in $S$. Furthermore, RelaxGCB relaxes the restrictive structure of independent sets, allowing to add some extra vertices to a maximal independent set (\ie, not contained by any other independent set) $I \subseteq C$ without increasing the upper bound.

Based on our observation that the coloring-based and partition-based upper bounds are complementary, we propose another new upper bound called RelaxPUB. 
RelaxPUB combines our RelaxGCB with a refined PUB called DisePUB proposed in DiseMKP~\cite{jiang2023Dise}. Different from common methods for combining various upper bounds that sequentially calculate them until the branch can be pruned or cannot be pruned by any upper bound, RelaxPUB combines RelaxGCB and DisePUB in a novel and compact way. 
When calculating the upper bound of the number of vertices that $C$ can provide for $S$, both of them iteratively
extracts a subset $I \subseteq C$ from $C$, calculating the upper bound of the number of vertices that $I$ can provide for $S$ and accumulating the upper bounds. 
In each iteration, RelaxPUB uses RelaxGCB and DisePUB to respectively extract a subset from $C$ and selects the better one, 
and repeats such a process until $C$ is empty.



We evaluate our proposed two upper bounds by applying them to state-of-the-art BnB MKP algorithms, including Maplex, kPlexS, DiseMKP, and KPLEX. Among them, Maplex only applies coloring-based upper bound, \ie, GCB, and the others only apply PUB. We replace their original upper bounds with our RelaxGCB and RelaxPUB. Extensive experiments show that in both dense and massive sparse graphs using various $k$ values, RelaxGCB is a significant improvement over the GCB, and RelaxPUB can significantly improve the baseline algorithms, indicating the excellent and generic performance of our methods.

\section{Preliminaries}
\label{sec-Pre}


\subsection{Definitions}
Given an undirected graph $G = (V, E)$, where $V$ is the vertex set and $E$ the edge set, the density of $G$ is $2|E| / (|V|(|V|-1))$, we denote $N(v)$ as the set of vertices adjacent to $v$, 
which are also called the neighbors of $v$. 
Given a vertex set $S \subseteq V$, we denote $G[S]$ as the subgraph induced by $S$. Given an integer $k$, $S \subseteq V$ is a $k$-plex if each vertex $v \in S$ satisfies that $|S \backslash N(v)| \leq k$. 

For a growing partial $k$-plex $S$, we define $\omega_k(G,S)$ as the size of the maximum $k$-plex that includes all vertices in $S$, and $\delta(S,v) = |S \backslash N(v)|$ as the number of non-neighbors of vertex $v$ in $S$. Given an integer $k$, we further define $\delta_k^{-}(S,v) = k - \delta(S,v)$ to facilitate our algorithm description. 
If $v \in S$, $\delta_k^{-}(S,v)$ indicates the maximum number of non-adjacent vertices of $v$ that can be added to $S$. Otherwise, it indicates that, including $v$ itself, the maximum number of its non-adjacent vertices that can be added to $S$.

\subsection{Framework of BnB MKP Algorithms}
During the course of a general BnB MKP algorithm, a lower bound $lb$ on the size of the maximum $k$-plex is maintained, which is usually initialized by some heuristic algorithms~\cite{ZHX+21,JZX+21,Chang2022kPlexS}, and is updated once a larger $k$-plex is found. 

A general BnB MKP algorithm usually contains a preprocessing stage and a BnB search stage. 
During the preprocessing, the algorithm uses some reduction rules~\cite{GCY+18,ZHX+21,Chang2022kPlexS} to remove vertices that are impossible to belong to a $k$-plex of size larger than $lb$. In the BnB search stage, the algorithm traverses the search tree to find the optimal solution. During the search, the algorithm always maintains two vertex sets, the current growing partial $k$-plex $S$, and its corresponding candidate set $C$ containing vertices that might be added to $S$. Once the algorithm selects a branching vertex $v$ to be added to $S$ from $C$, it calculates an upper bound $ub$ on the size of the maximum $k$-plex that can be extended from $S \cup \{v\}$, and the branch of adding $v$ to $S$ will be pruned if $ub \leq lb$. 

\section{The RelaxGCB Bound}
Given a growing partial $k$-plex $S$ and the corresponding candidate vertex set $C$, the graph color bound (GCB) proposed in Maplex~\cite{ZHX+21} claims that an independent set $I \subseteq C$ can provide at most $\min\{|I|,k\}$ vertices for $S$. As introduced in Section~\ref{sec-Intro}, our proposed Relaxed Graph Color Bound (RelaxGCB) improves GCB from two aspects, \ie, calculating a tighter bound for each independent set $I \subseteq C$ and allowing add extra vertices to a maximal independent set without changing the upper bound.


In the following, we first introduce our two improvements and provide an example for illustration, then present our RelaxColoring algorithm for calculating the RelaxGCB bound.


\subsection{A Tighter Upper Bound for Independent Sets}
\label{sec-RelaxGCB1}
Since vertices in the candidate set $C$ might be non-adjacent to some vertices in the growing partial $k$-plex $S$, an independent set $I \subseteq C$ actually cannot provide $k$ vertices for $S$ sometimes even when $|I| > k$. We introduce a Tighter Independent Set Upper Bound (TISUB) on the number of vertices that an independent set $I \subseteq C$ can provide for $S$.



\begin{lemma}[TISUB]
\label{lemma-RelaxGCB1}
    Suppose $I = \{v_1, v_2, \cdots, v_{|I|}\} \subseteq C$ is an independent set and $\delta_k^{-}(S,v_1) \geq \delta_k^{-}(S,v_2) \geq \cdots \geq \delta_k^{-}(S,v_{|I|})$, $\mathop{\max}\{i | \delta_k^{-}(S,v_i) \geq i\}$ is an upper bound of the number of vertices that $I$ can provide for $S$.
\end{lemma}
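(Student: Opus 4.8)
The plan is to argue that the stated quantity bounds the size of $I \cap P$ for every $k$-plex $P$ containing $S$. Since the number of vertices that $I$ can provide for $S$ is by definition the maximum of $|I \cap P|$ over all $k$-plex extensions $P \supseteq S$, establishing the per-extension bound and then maximizing over $P$ yields the lemma. Write $m^{*} = \max\{i \mid \delta_k^{-}(S,v_i) \geq i\}$; this set is nonempty whenever $I \neq \emptyset$, because any candidate $v \in C$ can be added to $S$ and hence satisfies $\delta(S,v) \leq k-1$, i.e. $\delta_k^{-}(S,v_1) \geq 1$.

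First I would fix an arbitrary $k$-plex $P \supseteq S$, set $I' = I \cap P$ and $m = |I'|$, and perform a counting step on the non-neighbors of a single vertex $v \in I'$ inside $P$. Because $I$ is an independent set, every other vertex of $I'$ is non-adjacent to $v$, and under the $k$-plex convention $v$ is counted as a non-neighbor of itself; hence $v$ has at least $m$ non-neighbors within $I'$. Together with the $\delta(S,v) = |S \backslash N(v)|$ non-neighbors already present in $S$, this gives $|P \backslash N(v)| \geq \delta(S,v) + m$. The $k$-plex constraint $|P \backslash N(v)| \leq k$ then forces $\delta(S,v) + m \leq k$, that is $\delta_k^{-}(S,v) \geq m$, and this holds for every $v \in I'$.

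Next I would convert this per-vertex inequality into a statement about the sorted sequence. Let $S_m = \{v \in I \mid \delta_k^{-}(S,v) \geq m\}$; the previous step shows $I' \subseteq S_m$, so $|S_m| \geq m$. Since $v_1, \dots, v_{|I|}$ are listed in non-increasing order of $\delta_k^{-}$, the set $S_m$ is a prefix $\{v_1, \dots, v_{|S_m|}\}$, and $|S_m| \geq m$ forces $v_m \in S_m$, i.e. $\delta_k^{-}(S,v_m) \geq m$. Therefore $m$ belongs to $\{i \mid \delta_k^{-}(S,v_i) \geq i\}$, so $m \leq m^{*}$. Taking the maximum over all extensions $P$ gives the claimed upper bound.

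I expect the only delicate point to be the bookkeeping around the ``including $v$ itself'' convention: one must check that the $m$ non-neighbors of $v$ inside $I'$ already account for $v$ itself, so that no off-by-one error arises in $|P \backslash N(v)| \geq \delta(S,v) + m$. The remaining ingredients are routine, namely the independence of $I$ and the monotonicity used to pass from ``at least $m$ vertices have value $\geq m$'' to ``$\delta_k^{-}(S,v_m) \geq m$''. It is also worth observing that the bound is robust to $P$ containing candidate vertices outside $I$: such vertices can only add to the non-neighbor count of $v$, which only strengthens the inequality $\delta_k^{-}(S,v) \geq m$, so the argument needs only the lower bound on $|P \backslash N(v)|$ and never the exact value.
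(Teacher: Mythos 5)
Your proof is correct, and it takes a cleaner route than the paper's. The paper argues operationally: it claims that inserting the vertices of $I$ into $S$ in non-increasing order of $\delta_k^{-}(S,\cdot)$ is ``one of the best orders,'' and then counts that the $(i{+}1)$-st inserted vertex accumulates $\delta(S,v_{i+1}) + i + 1$ non-neighbors, so only positions $i$ with $\delta_k^{-}(S,v_i) \geq i$ survive. That argument leans on an optimality-of-ordering claim that is asserted rather than proved. You avoid this entirely by fixing an arbitrary $k$-plex $P \supseteq S$, letting $m = |I \cap P|$, and observing that independence of $I$ forces $|P \setminus N(v)| \geq \delta(S,v) + m$ for every $v \in I \cap P$ (the $m$ counting the $m-1$ other selected vertices plus $v$ itself), whence $\delta_k^{-}(S,v) \geq m$ for all $m$ selected vertices; the sorted-prefix observation then gives $\delta_k^{-}(S,v_m) \geq m$ and $m \leq \max\{i \mid \delta_k^{-}(S,v_i) \geq i\}$. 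The counting kernel is the same in both proofs, but your static, per-extension formulation is more rigorous: it needs no claim about which insertion order is best, and as you note it is insensitive to $P$ containing vertices of $C \setminus I$, since those only increase the non-neighbor count. The one cosmetic loose end is the degenerate case $I \cap P = \emptyset$, where the inequality $0 \leq m^{*}$ holds under the usual convention that the maximum over an empty index set is $0$; your remark that $\delta_k^{-}(S,v_1) \geq 1$ for any genuine candidate already covers the nonempty case.
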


\begin{proof}
    Firstly, ignoring the constraint of at most $k$ non-neighbors of vertices in $S$, 
    $v_1, v_2, \cdots, v_{|I|}$ is one of the best orders for adding vertices in $I$ to $S$ to obtain the largest $k$-plex in $G[S \cup I]$, because the more non-neighbors in $S$ (as indicated by $\delta(S,v)$), the easier it is for vertices to violate the constraint. 
    Secondly, suppose vertices $v_1, \cdots, v_i$ are going to be added to $S$, further adding $v_{i+1}$ to $S$ leads to $\delta(S,v_{i+1}) + i + 1$ non-neighbors of $v_{i+1}$ in $S$ (including $v_{i+1}$ itself). Therefore, only vertices $v_i \in I$ with $\delta(S,v_i) + i \leq k$, \ie, $\delta_k^{-}(S,v_i) \geq i$, can be added to $S$, and $I$ can provide at most $\mathop{\max}\{i | \delta_k^{-}(S,v_i) \geq i\}$ vertices for $S$. 
\end{proof}


For convenience, in the rest of this paper, we regard the vertices in any independent set $I \subseteq C$, \ie, $\{v_1,v_2,\cdots,v_{|I|}\}$, as sorted in 
non-ascending order of their $\delta_k^{-}(S,v)$ values.
We further define \textit{TISUB}$(I,S) = \mathop{\max}\{i | \delta_k^{-}(S,v_i) \geq i\}$ as the upper bound calculated by TISUB on the number of vertices that $I$ can provide for $S$. Note that the value of \textit{TISUB}$(I,S)$ is obviously bounded by $|I|$ since $i \leq |I|$, which eliminates the need for term $|I|$ in TISUB. Moreover, since $\delta(S,v) \geq 0$, $\delta_k^{-}(S,v) \leq k$ holds, and \textit{TISUB}$(I,S)$ is also bounded by $k$. Therefore, TISUB is strictly never worse than GCB (\ie, $\min\{|I|, k\}$).


\subsection{Relax the Independent Sets}
Since the relaxation property of $k$-plex over clique, an independent set $I$ in the candidate set $C$ can usually provide more than one vertices for the growing the partial $k$-plex $S$, and the restriction of independent sets can also be relaxed to contain more vertices. 

In the following, we define two kinds of vertices and then introduce two different rules for relaxing the restriction of independent sets and making maximal independent sets contain extra vertices without increasing their TISUB. 

\begin{definition}[Conflict Vertex]
\label{definition-CV}
    Given a vertex set $I$, we denote vertices $v \in I$ that are adjacent to at least one vertex in $I$ as conflict vertices.
\end{definition}

\begin{definition}[Loose Vertex]
\label{definition-SV}
    Given a $k$-plex $S$ and a vertex set $I \subseteq C$, suppose $UB$ is an upper bound of the number of vertices that $I$ can provide for $S$, we denote each vertex $v \in I$ with $\delta_k^{-}(S,v) > UB$ as a loose vertex.
\end{definition}





\noindent \textbf{Rule 1.} Suppose $UB$ is an upper bound of the number of vertices that a vertex set $I \subseteq C$ can provide for $S$. It is allowed to add vertex $v$ to $I$ if the number of vertices that are loose or conflict in $I \cup \{v\}$ is no more than $UB$.

\begin{lemma}
\label{lemma-R1}
    After adding any vertex $v$ to $I \subseteq C$ according to Rule 1, $UB$ is still an upper bound of the number of vertices that $I' = I \cup \{v\}$ can provide for $S$.
\end{lemma}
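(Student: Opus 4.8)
The plan is to argue by contradiction against the definition of an upper bound. Suppose $UB$ fails for $I' = I \cup \{v\}$; then there exists a set $T \subseteq I'$ with $|T| \geq UB + 1$ such that $S \cup T$ is a $k$-plex, i.e. $T$ witnesses that $I'$ can provide strictly more than $UB$ vertices for $S$. Note that $S \cap T = \emptyset$, since $T \subseteq C \subseteq V \backslash S$. I would then split the analysis according to whether or not $T$ contains a vertex that is \emph{neither} loose \emph{nor} conflict in $I'$, where loose/conflict are evaluated with respect to the given $UB$.

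First I would handle the case where $T$ contains some vertex $v_0$ that is non-loose and non-conflict in $I'$. Being non-conflict, $v_0$ is non-adjacent to every other vertex of $I'$, hence to every other vertex of $T$; together with the convention that $v_0$ is its own non-neighbor (so that it is counted in the $k$-plex constraint), this gives $T \subseteq (S \cup T) \backslash N(v_0)$, so all $|T|$ vertices of $T$ contribute to the non-neighbor count of $v_0$. Adding the $\delta(S,v_0)$ non-neighbors of $v_0$ already in $S$, the $k$-plex constraint at $v_0$ in $S \cup T$ forces $\delta(S,v_0) + |T| \leq k$, that is, $|T| \leq \delta_k^{-}(S,v_0)$. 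But $v_0$ is non-loose, so $\delta_k^{-}(S,v_0) \leq UB$, which yields $|T| \leq UB$ and contradicts $|T| \geq UB + 1$.

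Next I would treat the complementary case, in which every vertex of $T$ is loose or conflict in $I'$. Then $T$ is contained in the set of loose-or-conflict vertices of $I'$, whose cardinality is at most $UB$ precisely by the hypothesis of Rule 1 under which $v$ was added. Hence $|T| \leq UB$, again contradicting $|T| \geq UB + 1$. Since both cases are impossible, no such $T$ exists, and therefore $UB$ remains an upper bound on the number of vertices that $I'$ can provide for $S$.

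The main obstacle is not conceptual depth but careful bookkeeping: I must evaluate \emph{loose} and \emph{conflict} in the enlarged set $I'$ rather than in $I$, keep $S$ and $T$ disjoint so that the non-neighbor counts in $S$ and in $T$ simply add, and correctly include the self-count when bounding $|T|$ through $v_0$. The clean dichotomy---any feasible extension is either entirely loose-or-conflict (hence small by Rule 1) or contains a non-conflict vertex that forces all of $T$ into its non-neighborhood (hence small by the non-loose bound)---is what makes the argument go through.
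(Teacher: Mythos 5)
Your proof is correct and takes essentially the same route as the paper's own argument: the identical dichotomy of whether the added set contains a vertex that is neither loose nor conflict (bounded by $|T| \leq \delta_k^{-}(S,v_0) \leq UB$ via the non-neighbor count at that vertex) or consists entirely of loose-or-conflict vertices (bounded by $UB$ via Rule~1's precondition). You merely package it as a contradiction and spell out the counting more explicitly than the paper's brief sketch.
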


\begin{proof}
    On one hand, if adding a vertex $v \in I'$ that is neither \textit{conflict} nor \textit{loose} to $S$, then at most $\delta_k^{-}(S,v) - 1 < UB$ other vertices in $I'$ can be added to $S$. On the other hand, 
    by Rule 1, we require the number of \textit{conflict} or \textit{loose} vertices in $I'$ to be no more than $UB$. Therefore, at most $UB$ vertices in $I'$ can be added to $S$.
\end{proof}

\noindent \textbf{Rule 2.} Suppose $UB$ is an upper bound of the number of vertices that a vertex set $I \subseteq C$ can provide for $S$. It is allowed to add vertex $v$ to $I$ if $v$ is adjacent to at most $UB - \delta_k^{-}(S,v)$ vertices in $I$.

\begin{lemma}
\label{lemma-R2}
    After adding any vertex $v$ to $I \subseteq C$ according to Rule 2, $UB$ is still an upper bound of the number of vertices that $I' = I \cup \{v\}$ can provide for $S$.
\end{lemma}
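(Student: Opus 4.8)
The plan is to fix an arbitrary set $A \subseteq I' = I \cup \{v\}$ whose addition to $S$ yields a valid $k$-plex, and to prove directly that $|A| \leq UB$. This suffices, since $UB$ being an upper bound on the number of vertices $I'$ can provide for $S$ is precisely the statement that no admissible $A$ exceeds it. I would then split into two cases according to whether the newly added vertex $v$ belongs to $A$.

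If $v \notin A$, then $A \subseteq I$, so $A$ is itself a valid selection of vertices that $I$ provides for $S$, and the hypothesis that $UB$ bounds $I$ gives $|A| \leq UB$ immediately. This case is essentially free and simply inherits the existing bound.

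The substantive case is $v \in A$. Here I would partition the remaining added vertices $A \setminus \{v\}$ into those adjacent to $v$ and those non-adjacent to $v$. For the non-adjacent part, I would invoke the $k$-plex constraint on $v$ itself: in the final set $v$ may have at most $k$ non-neighbors counting itself, of which $\delta(S,v)$ already arise from $S$ and one more is $v$ itself, leaving room for at most $k - \delta(S,v) - 1 = \delta_k^{-}(S,v) - 1$ non-adjacent companions drawn from $I$. For the adjacent part, Rule 2 guarantees that $v$ is adjacent to at most $UB - \delta_k^{-}(S,v)$ vertices in all of $I$, hence to at most that many within $A$. Summing the two parts together with $v$ itself yields
$$|A| \leq 1 + \bigl(\delta_k^{-}(S,v) - 1\bigr) + \bigl(UB - \delta_k^{-}(S,v)\bigr) = UB,$$
exactly the claimed bound, and combining the two cases completes the argument.

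The main obstacle, such as it is, lies not in the counting but in the bookkeeping of the ``$v$ itself'' term hidden inside $\delta_k^{-}(S,v)$: because the definition counts $v$ as one of its own admissible non-neighbors, one must subtract one from $\delta_k^{-}(S,v)$ when accounting for the non-adjacent companions of $v$, and it is precisely this $-1$ that cancels the $+1$ for $v$ itself so that the two Rule-2 budgets fit together exactly. Getting this off-by-one correct, and confirming that Rule 2 bounds the neighbors of $v$ over all of $I$ (so that it certainly covers the neighbors actually chosen in $A$), is the only place where care is needed.
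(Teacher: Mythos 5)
Your proof is correct and follows essentially the same route as the paper's: a case split on whether $v$ is among the selected vertices, and in the substantive case a decomposition of the remaining selection into vertices non-adjacent to $v$ (bounded by $\delta_k^{-}(S,v)-1$) and adjacent to $v$ (bounded by $UB - \delta_k^{-}(S,v)$ via Rule 2), summing to $UB$. Your write-up is merely a more careful, formalized version of the paper's argument, with the same off-by-one bookkeeping for the ``$v$ itself'' term.
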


\begin{proof}
    On one hand, if adding $v$ to $S$, at most $\delta_k^{-}(S,v) - 1$ other vertices that are non-adjacent to $v$ in $I'$ can be added to $S$. Since $v$ is adjacent to at most $UB - \delta_k^{-}(S,v)$ vertices in $I'$, thus after adding $v$ to $S$, $I'$ can still provide at most $UB - 1$ vertices for $S$. On the other hand, if not adding $v$ to $S$, $I'$ itself can only provide at most $UB$ vertices for $S$.
\end{proof}

Given a maximal independent set $I \subseteq C$, both Rule 1 and Rule 2 can add extra vertices to $I$ without increasing its TISUB. Actually, Rule 1 allows us to add finite (at most \textit{TISUB}$(I,S)$ - 1) \textit{conflict} vertices to $I$, and Rule 2 can be repeatedly used to add any vertex satisfying the rule to $I$.



\begin{figure}[t]
\centering
\includegraphics[width=0.45\columnwidth]{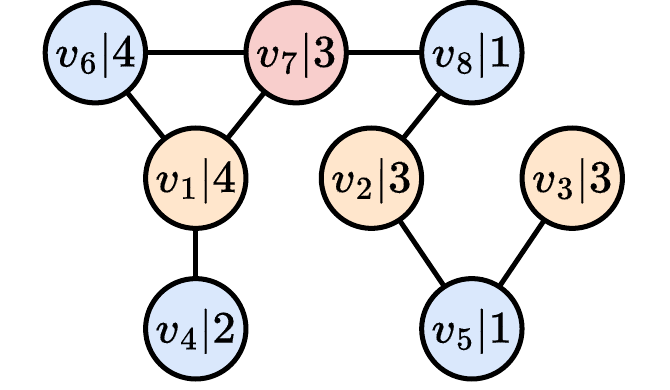}
\caption{An example for comparing the upper bounds.\vspace{-1em}}
\label{fig-bound}
\end{figure}

\subsection{An Example for Illustration}
We provide an example in Figure~\ref{fig-bound} to show how the upper bounds, including GCB, TISUB, and RelaxGCB, are calculated and how the two rules are used. 
Figure~\ref{fig-bound} illustrates a subgraph of $G$ induced by the candidate set $C = \{v_1, v_2, \cdots, v_8\}$, \ie, $G[C]$, of a 4-plex $S$. To simplify the figure, we hide the 4-plex $S$ and only depict the candidate vertices. Vertex $v_i|t$ in Figure~\ref{fig-bound} identifies a vertex $v_i \in C$ with $\delta_k^{-}(S,v_i) = t$.

Suppose we sequentially color vertices $v_1, v_2, \cdots, v_8$ under the constraint that adjacent vertices cannot be in the same color, $C$ can be partitioned into 3 independent sets, $I_1 = \{v_1,v_2,v_3\}$, $I_2 = \{v_4,v_5,v_6,v_8\}$ and $I_3 = \{v_7\}$, as indicated by the colors of the vertices. The GCB of $\omega_4(G,S)$ is $|S| + \sum_{i=1}^3{\min\{|I_i|,4\}} = |S| + 3 + 4 + 1 = |S| + 8$. The TISUB of $\omega_4(G,S)$ is $|S| + \sum_{i=1}^3{\textit{TISUB}(I_i,S)} = |S| + 3 + 2 + 1 = |S| + 6$.

Then, let us use Rule 1 to make independent set $I_1$ contain more vertices. For $I_1$, since $\textit{TISUB}(I_1,S) = 3$, there is only one \textit{loose} vertex $v_1$ in $I_1$. By applying Rule 1, we can add vertices $v_6$ and $v_7$ to $I_1$ without increasing the upper bound of $\omega_4(G[S \cup I_1],S)$, since there are only 3 \textit{loose} or \textit{conflict} vertices, \ie, $v_1,v_6,v_7$, in $I_1 \cup \{v_6,v_7\}$. After the operation, $C$ is partitioned into two sets, $I_5 = I_1 \cup \{v_6,v_7\}$ and $I_6 = \{v_4,v_5,v_8\}$. The new upper bound of $\omega_4(G,S)$ is $|S| + \textit{TISUB}(I_5,S) + \textit{TISUB}(I_6,S) = |S| + 3 + 1 = |S| + 4$.

Finally, let us use Rule 2 to further make set $I_5$ contain more vertices. According to Rule 2, all vertices in $I_6$ can be added to $I_5$ without increasing the upper bound of $\omega_4(G[S \cup I_5],S)$. After the operation, the final RelaxGCB of $\omega_4(G,S)$ is $|S| + \textit{TISUB}(I_5,S) = |S| + 3$.

\subsection{The RelaxColoring Algorithm}
\label{sec-RelaxColoring}
This subsection introduces our proposed RelaxColoring algorithm for calculating the proposed RelaxGCB, as summarized in Algorithm~\ref{alg-RelaxColoring}. 
The algorithm first uses $|S|$ to initialize the upper bound $UB$ (line 1), and then repeatedly uses the TryColor() function to extract a subset $I \subseteq C$ and calculate the upper bound on the number of vertices that $I$ can provide for $S$, \ie, $ub$ (line 3) until $C = \emptyset$ (line 2). After each execution of function TryColor(), the candidate set $C$ and upper bound $UB$ are both updated (line 4).

\begin{algorithm}[t]
\fontsize{10.1pt}{15}
\caption{RelaxColoring$(G,k,S,C)$}
\label{alg-RelaxColoring}
\LinesNumbered 
\KwIn{A graph $G = (V,E)$, an integer $k$, the current partial $k$-plex $S$, the candidate set $C$}
\KwOut{RelaxGCB of $\omega_k(G,S)$}
initialize the upper bound $UB \leftarrow |S|$\;
\While{$C \neq \emptyset$}{
$\{I,ub\} \leftarrow$ TryColor$(G,k,S,C)$\;
$C \leftarrow C \backslash I$, $UB \leftarrow UB + ub$\;
}
\textbf{return} $UB$\;
\end{algorithm}

\begin{algorithm}[t]
\fontsize{10.1pt}{15}
\caption{TryColor$(G,k,S,C)$}
\label{alg-TryColor}
\LinesNumbered 
\KwIn{A graph $G = (V,E)$, an integer $k$, the current partial $k$-plex $S$, the candidate set $C$}
\KwOut{A vertex set $I$, an upper bound $ub$ of the number of vertices that $I$ can provide for $S$}
initialize $I \leftarrow \emptyset$\;
\For{\rm{\textbf{each}} vertex $v \in C$}{
\lIf{$N(v) \cap I = \emptyset$}{$I \leftarrow I \cup \{v\}$}
}
$ub \leftarrow$ \textit{TISUB}$(I,S)$\;
initialize the set of \textit{loose} or \textit{conflict} vertices $LC \leftarrow \{v \in I | \delta_k^{-}(S,v) > ub\}$\;
\If{$|LC| < ub$}{
\For{\rm{\textbf{each}} vertex $v \in C \backslash I$}{
$CV \leftarrow \{v\} \cup \{N(v) \cap I \backslash LC\}$\;
\If{$|LC| + |CV| \leq ub$}{
$I \leftarrow I \cup \{v\}$\;
$LC \leftarrow LC \cup CV$\;
\lIf{$|LC| = ub$}{\textbf{break}}
}
}
}
\For{\rm{\textbf{each}} vertex $v \in C \backslash I \wedge \delta_k^{-}(S,v) < ub$}{
\If{$|N(v) \cap I| \leq ub - \delta_k^{-}(S,v)$}{$I \leftarrow I \cup \{v\}$\;}
}
\textbf{return} $\{I,ub\}$\;
\end{algorithm}

Function TryColor() is summarized in Algorithm~\ref{alg-TryColor}, which first finds a maximal independent set $I \subseteq C$ (lines 1-3) and calculates its TISUB (line 4). Then, the algorithm initializes the set of \textit{loose} or \textit{conflict} vertices $LC$ (line 5) and tries to add as many vertices as possible to $I$ according to Rule 1 (lines 6-12). Once trying to add each vertex $v$, the algorithm uses $CV$ to denote the extra \textit{conflict} vertices caused by adding $v$ to $I$ (line 8). Since $I$ is a maximal independent set in $C$, adding any vertex $v$ to $I$ increases at least one \textit{conflict} vertices, \ie, $v$ itself (line 8). Thus, the utilization of Rule 1 can be terminated when $|LC| \geq ub$ (lines 6 and 12). Finally, the algorithm applies Rule 2 to further add vertices to $I$ (lines 13-15). Since for each vertex $v \in C \backslash I$, $|N(v) \cap I| > 0$ holds, only vertex $v \in C \backslash I$ with $\delta_k^{-}(S,v) < ub$ can be added to $I$ according to Rule 2 (line 13).

The time complexities of RelaxColoring algorithm and TryColor function are $O(|C|^2 \times T)$ and $O(|C| \times T)$, respectively, where $O(T)$ is the time complexity of the intersection operation between $N(v)$ and $I$ (or $I \backslash LC$) used in lines 3, 8, and 14 in Algorithm~\ref{alg-TryColor}. Actually, $O(T)$ is bounded by $O(|V|)$ and much smaller than $O(|V|)$ by applying the bitset encoding method~\cite{SRJ11}.

\section{The RelaxPUB Bound}
Motivated by the complementarity of the coloring-based and partition-based upper bounds, 
we propose to combine RelaxGCB with the newest PUB, DisePUB~\cite{jiang2023Dise}, and propose a better and generic upper bound for MKP. In this section, we first introduce DisePUB, then provide two examples to illustrate the complementarity of the coloring-based and partition-based upper bounds, and finally present our new upper bound, RelaxPUB.

\subsection{Revisiting DisePUB}
Given a growing partial $k$-plex $S$ and the corresponding candidate set $C$, for each vertex $v \in S$, DisePUB claims that a subset $I \subseteq C$ can provide at most $\min\{|I|,\delta_k^{-}(S,v)\}$ vertices for $S$ if $N(v) \cap I = \emptyset$. 
Given a vertex $v \in S$, let $I = C \backslash N(v)$ and $ub = \min\{|I|,\delta_k^{-}(S,v)\}$, DisePUB defines a metric for $I$, \ie, $dise(I) = |I|/ub$, to evaluate the extraction of vertex set $I$. The larger the value of $dise(I)$, the more vertices that can be extracted from $C$ and the fewer increments on the upper bound of $\omega_k(G,S)$.

In each step, DisePUB traverses each vertex $v \in S$ with $\delta_k^{-}(S,v) > 0$ and selects the corresponding set $I = C \backslash N(v)$ with the largest value of $dise(I)$. Ties are broken by preferring larger extractions. We use function SelectPartition() to describe the selection, which is shown in Algorithm~\ref{alg-SelectPartition}. Then, DisePUB extracts $C \backslash N(v)$ from $C$ and increases the upper bound of $\omega_k(G,S)$ by $\min\{|C \backslash N(v)|,\delta_k^{-}(S,v)\}$.

DisePUB repeats the above process until vertices remaining in $C$ are adjacent to all vertices in $S$. DisePUB denotes the set of remaining vertices in $C$ as $\pi_0$ and finally increases the upper bound of $\omega_k(G,S)$ by $|\pi_0|$. 

\begin{algorithm}[t]
\fontsize{10.1pt}{15}
\caption{SelectPartition$(G,k,S,C)$}
\label{alg-SelectPartition}
\LinesNumbered 
\KwIn{A graph $G = (V,E)$, an integer $k$, the current partial $k$-plex $S$, the candidate set $C$}
\KwOut{A vertex set $I$, an upper bound $ub$ of the number of vertices that $I$ can provide for $S$}
initialize $dise^* \leftarrow 0, ub^* \leftarrow 0, I^* \leftarrow \emptyset$\;
\For{\rm{\textbf{each}} vertex $v \in S \wedge \delta_k^{-}(S,v) > 0$}{
$I \leftarrow C \backslash N(v)$\;
$ub \leftarrow \min\{|I|,\delta_k^{-}(S,v)\}$\;
\If{$|I|/ub > dise^* \vee (|I|/ub = dise^* \wedge |I| > |I^*|)$}{
$dise^* \leftarrow |I|/ub, ub^* \leftarrow ub, I^* \leftarrow I$\;
}
}
\textbf{return} $\{I^*,ub^*\}$\;
\end{algorithm}

\subsection{Complementarity of GCB and PUB}

\begin{figure}[t]
\centering
\subfigure[GCB prevails]{
\includegraphics[width=0.4\columnwidth]{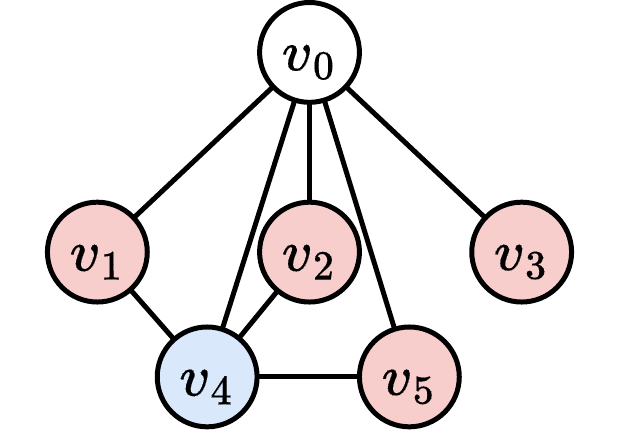}
\label{Complementary1}\hspace{1em}}
\subfigure[PUB prevails]{
\includegraphics[width=0.4\columnwidth]{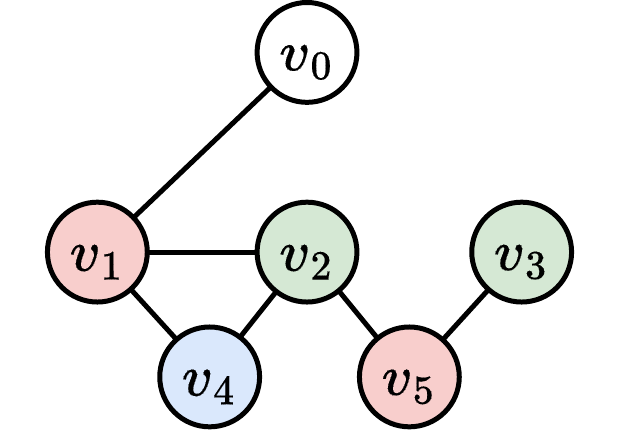}
\label{Complementary2}}
\vspace{-1em}\caption{Two examples for demonstrating the complementarity.\vspace{-0.5em}}
\label{fig-complementary}
\end{figure}
To better illustrate the complementarity of the coloring-based and partition-based upper bounds (\ie, GCB and PUB), we provide two examples in Figure~\ref{fig-complementary}, where the growing 2-plex $S$ contains only one vertex $v_0$ and its corresponding candidate set $C = \{v_1,v_2,v_3,v_4,v_5\}$. 

In Figure~\ref{Complementary1}, the GCB is tighter than the PUB. The vertices in $C$ are all adjacent to $v_0$, which means the vertices in $C$ are all in $\pi_0$. Thus, the PUB is $|S| + |\pi_0| = 6$. While by coloring the vertices in $C$, it can be partitioned into 2 independent sets $I_1 = \{v_1,v_2,v_3,v_5\}$ and $I_2 = \{v_4\}$, and the GCB is $|S|+\sum_{i=1}^2{\min\{|I_i|,2\}} = 4$. In contrast, the PUB is tighter than the GCB in Figure~\ref{Complementary2}, where $C$ can be partitioned into 3 independent sets $I_1 = \{v_1,v_5\}$, $I_2 = \{v_2,v_3\}$, and $I_3 = \{v_4\}$. Thus, the GCB is $|S|+\sum_{i=1}^3{\min\{|I_i|,2\}} = 6$. While 
vertices in $C$ except $v_1$ are non-adjacent to $v_0$, $\pi_0 = \{v_1\}$, thus the PUB is $|S| + |\pi_0| +\delta_2^{-}(S,v_0) = 3$.

\begin{algorithm}[t]
\fontsize{10.1pt}{15}
\caption{SelectUB$(G,k,S,C)$}
\label{alg-Seesawing}
\LinesNumbered 
\KwIn{A graph $G = (V,E)$, an integer $k$, the current partial $k$-plex $S$, the candidate set $C$}
\KwOut{RelaxPUB of $\omega_k(G,S)$}
initialize the upper bound $UB \leftarrow |S|$\;
\While{$C \neq \emptyset$}{
$\{I_{C},ub_{C}\} \leftarrow$ TryColor$(G,k,S,C)$\;
$\{I_{P},ub_{P}\} \leftarrow$ SelectPartition$(G,k,S,C)$\;
\eIf{$|I_{C}|/ub_{C} > |I_{P}|/ub_{P} \vee (|I_{C}|/ub_{C} = |I_{P}|/ub_{P} \wedge |I_{C}| > |I_R|)$}{
$C \leftarrow C \backslash I_{C}$, $UB \leftarrow UB + ub_{C}$\;
}{
$C \leftarrow C \backslash I_{P}$, $UB \leftarrow UB + ub_{P}$\;
}
}

\textbf{return} $UB$\;
\end{algorithm}

\subsection{Combining RelaxGCB and DisePUB}
Both RelaxGCB and DisePUB extract a subset from $C$ and accumulate the upper bound of $\omega_k(G,S)$. The $dise$ metric in DisePUB can also be used for the vertex set returned by TryColor(). RelaxPUB combines RelaxGCB and DisePUB by using them to select a promising extraction in each step.


We propose an algorithm called SelectUB for calculating the RelaxPUB of $\omega_k(G,S)$, which is presented in Algorithm~\ref{alg-Seesawing}. The algorithm calls TryColor() and SelectPartition() in each step and figures out whose returned vertex set is better according to the $dise$ metric. Ties are broken by preferring larger extraction. Once a better extraction is selected, The algorithm updates the candidate set $C$ and accumulates the upper bound of $\omega_k(G,S)$.

The time complexities of functions TryColor() and SelectPartition() are $O(|C| \times T)$ and $O(|C| \times |S|)$~\cite{jiang2023Dise}, respectively, where $O(T)$ is much smaller than $O(|V|)$ as referred to Section~\ref{sec-RelaxColoring}. The time complexity of the SelectUB algorithm is $O(|C|^2 \times (|S| + T))$.


\section{Experimental Results}
This section presents experimental results to evaluate the performance of the proposed two new upper bounds, RelaxGCB and RelaxPUB. We select state-of-the-art BnB MKP algorithms as the baselines, including Maplex\footnote{https://github.com/ini111/Maplex}~\cite{ZHX+21}, kPlexS\footnote{https://lijunchang.github.io/Maximum-kPlex}~\cite{Chang2022kPlexS}, DiseMKP\footnote{https://github.com/huajiang-ynu/ijcai23-kpx}~\cite{jiang2023Dise}, an improvement version of KpLeX~\cite{JZX+21}, and KPLEX\footnote{https://github.com/joey001/kplex\_degen\_gap}~\cite{wang2023fast}.

We replace the original upper bounds in the baselines with our RelaxGCB and RelaxPUB and conduct eight new BnB algorithms. The new algorithms based on Maplex with our upper bounds are denoted as RelaxGCB-Maplex and RelaxPUB-Maplex, respectively, and so on. 



\begin{figure*}[!t]
\centering
\subfigure[Comparison with Maplex]{
\includegraphics[width=0.49\columnwidth]{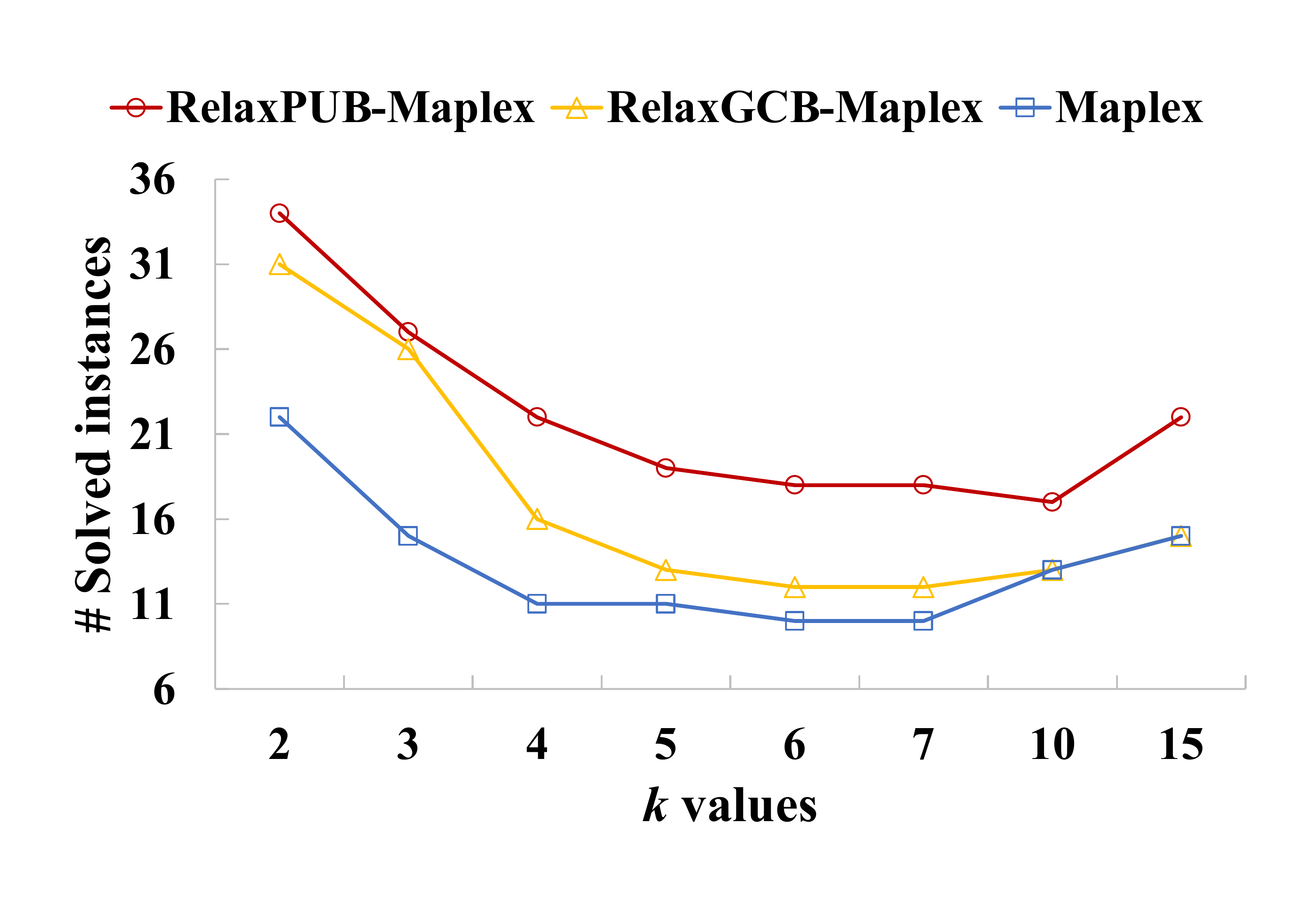}
\label{fig-Maplex-DIAMCS2}}
\subfigure[Comparison with kPlexS]{
\includegraphics[width=0.49\columnwidth]{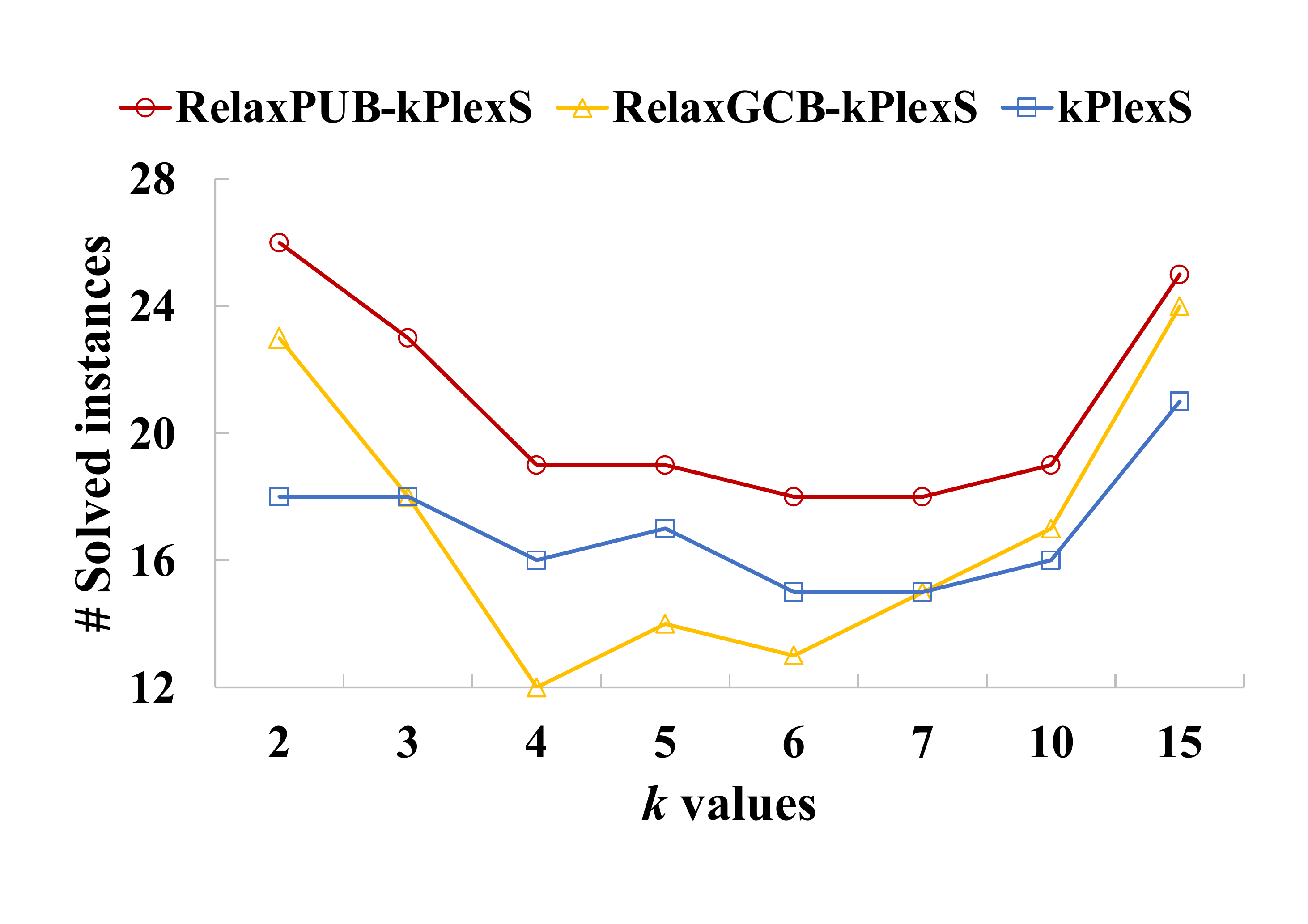}
\label{fig-kPlexS-DIAMCS2}}
\subfigure[Comparisons with DiseMKP]{
\includegraphics[width=0.49\columnwidth]{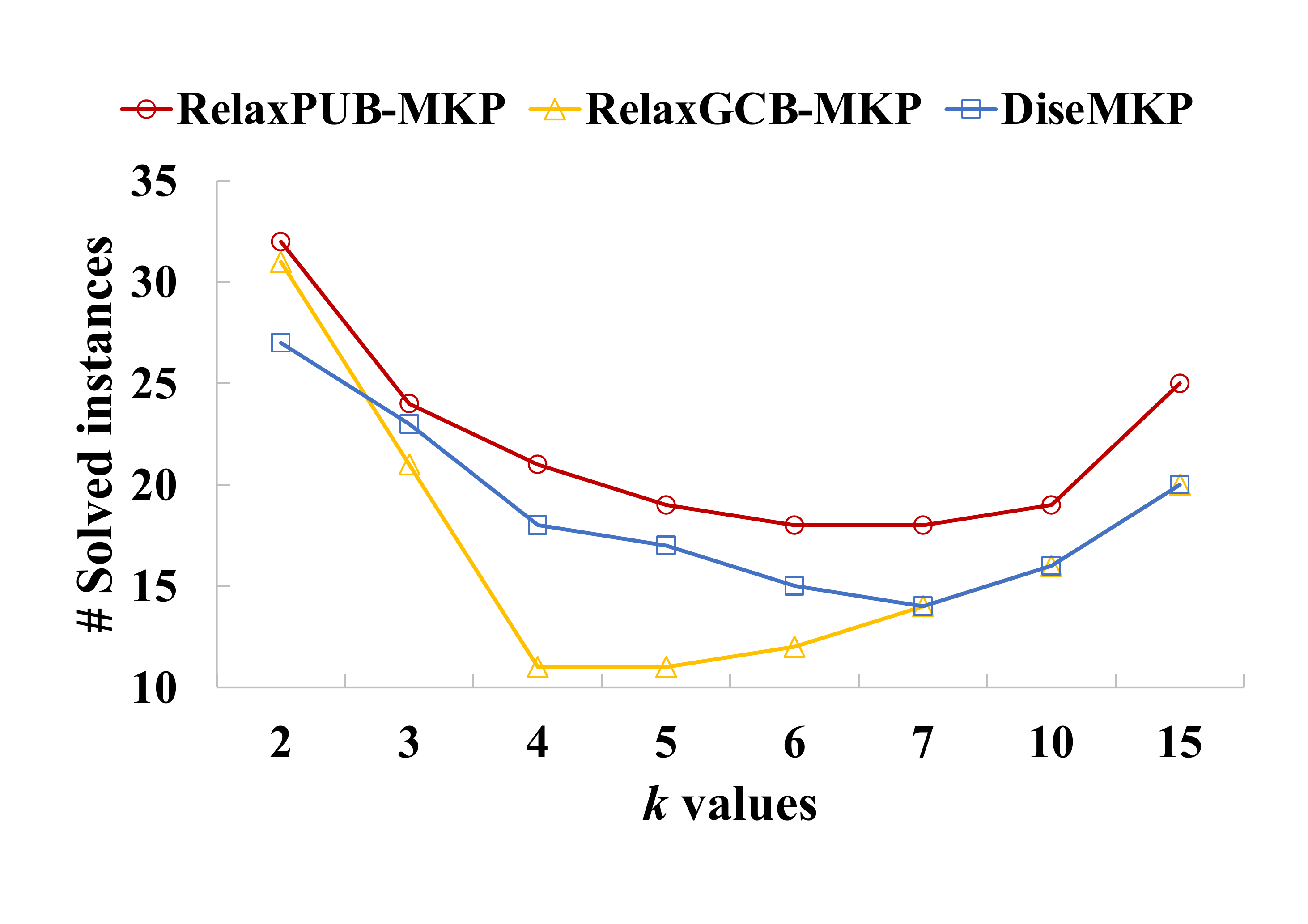}
\label{fig-DiseMKP-DIAMCS2}}
\subfigure[Comparison with KPLEX]{
\includegraphics[width=0.49\columnwidth]{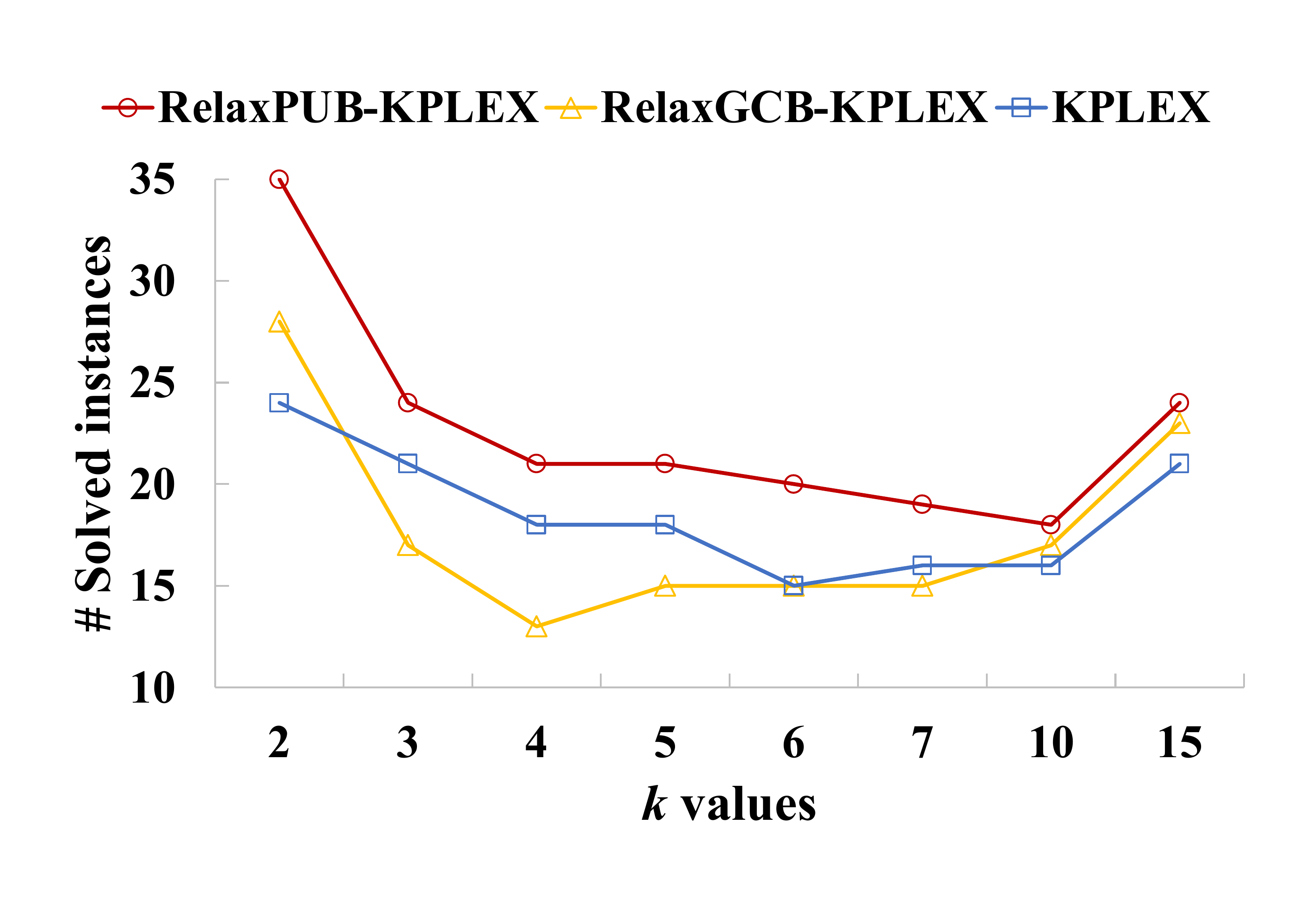}
\label{fig-KPLEX-DIAMCS2}}
\vspace{-1em}\caption{Comparisons on the dense 2nd DIMACS benchmark.
For the baselines, Maplex is based on GCB while the other three on PUB.
\vspace{-1em}}
\label{fig-DIAMCS2}
\end{figure*}

\begin{figure*}[!t]
\centering
\subfigure[Comparison with Maplex]{
\includegraphics[width=0.49\columnwidth]{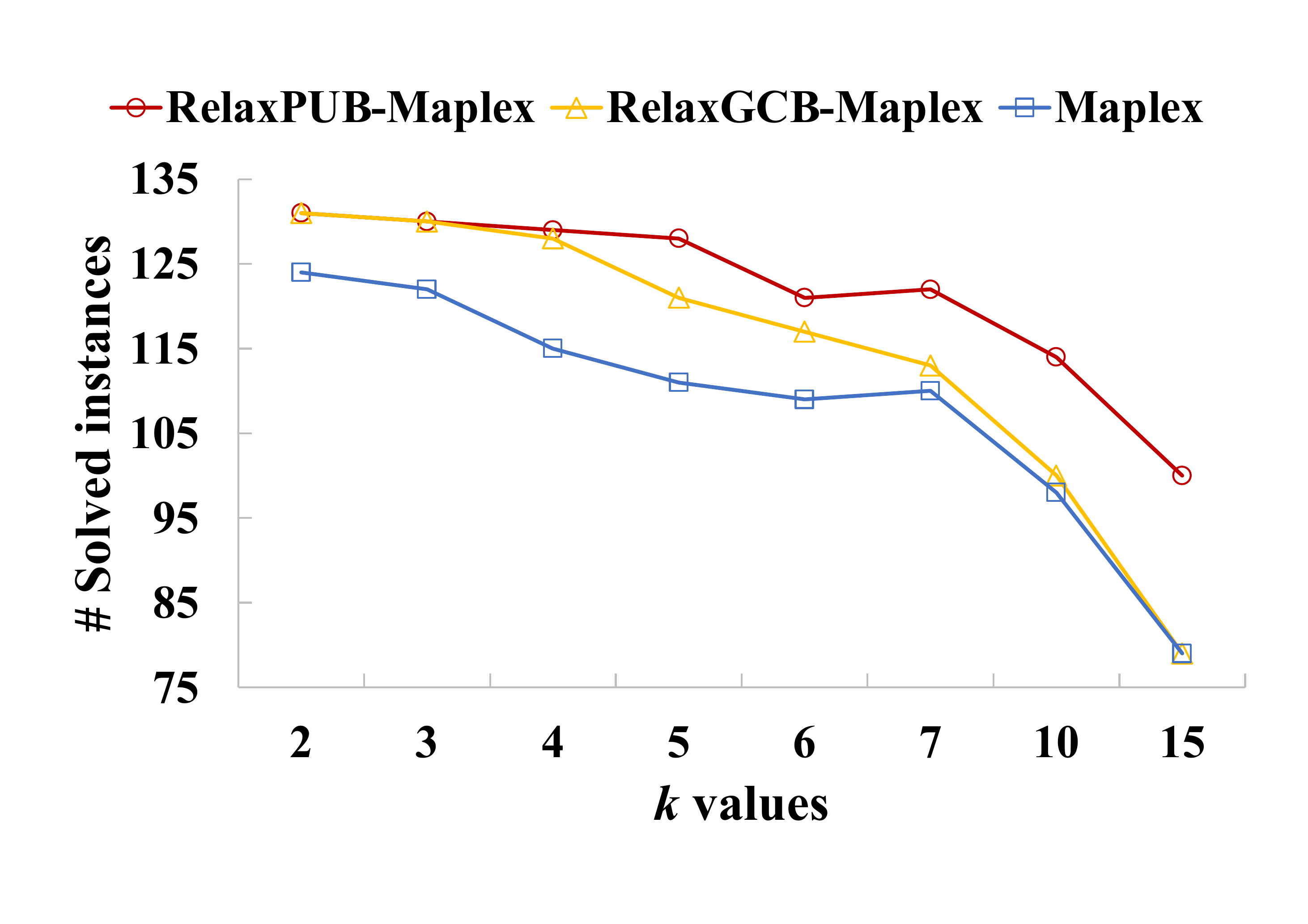}
\label{fig-Maplex-RealWorld}}
\subfigure[Comparison with kPlexS]{
\includegraphics[width=0.49\columnwidth]{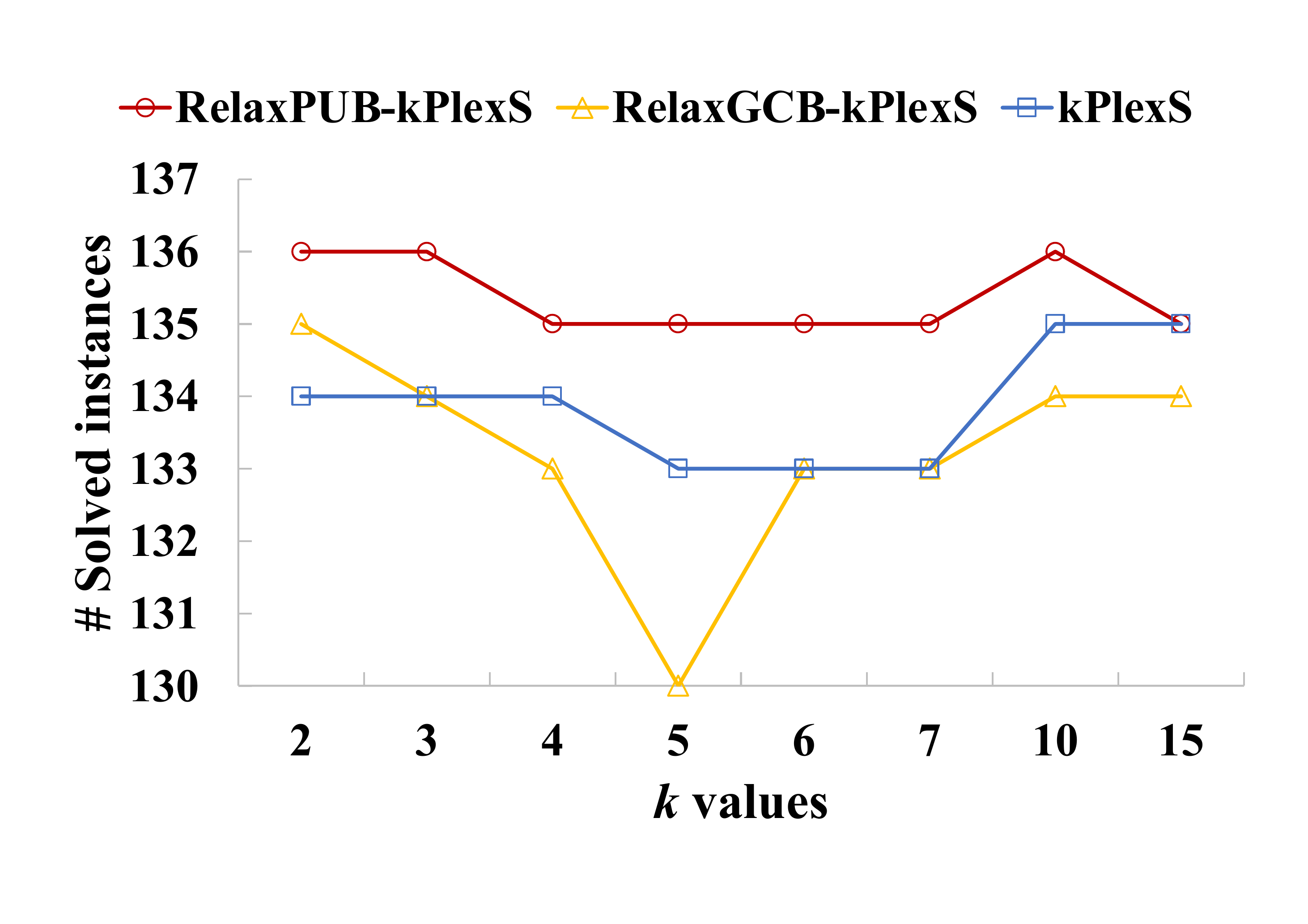}
\label{fig-kPlexS-RealWorld}}
\subfigure[Comparison with DiseMKP]{
\includegraphics[width=0.49\columnwidth]{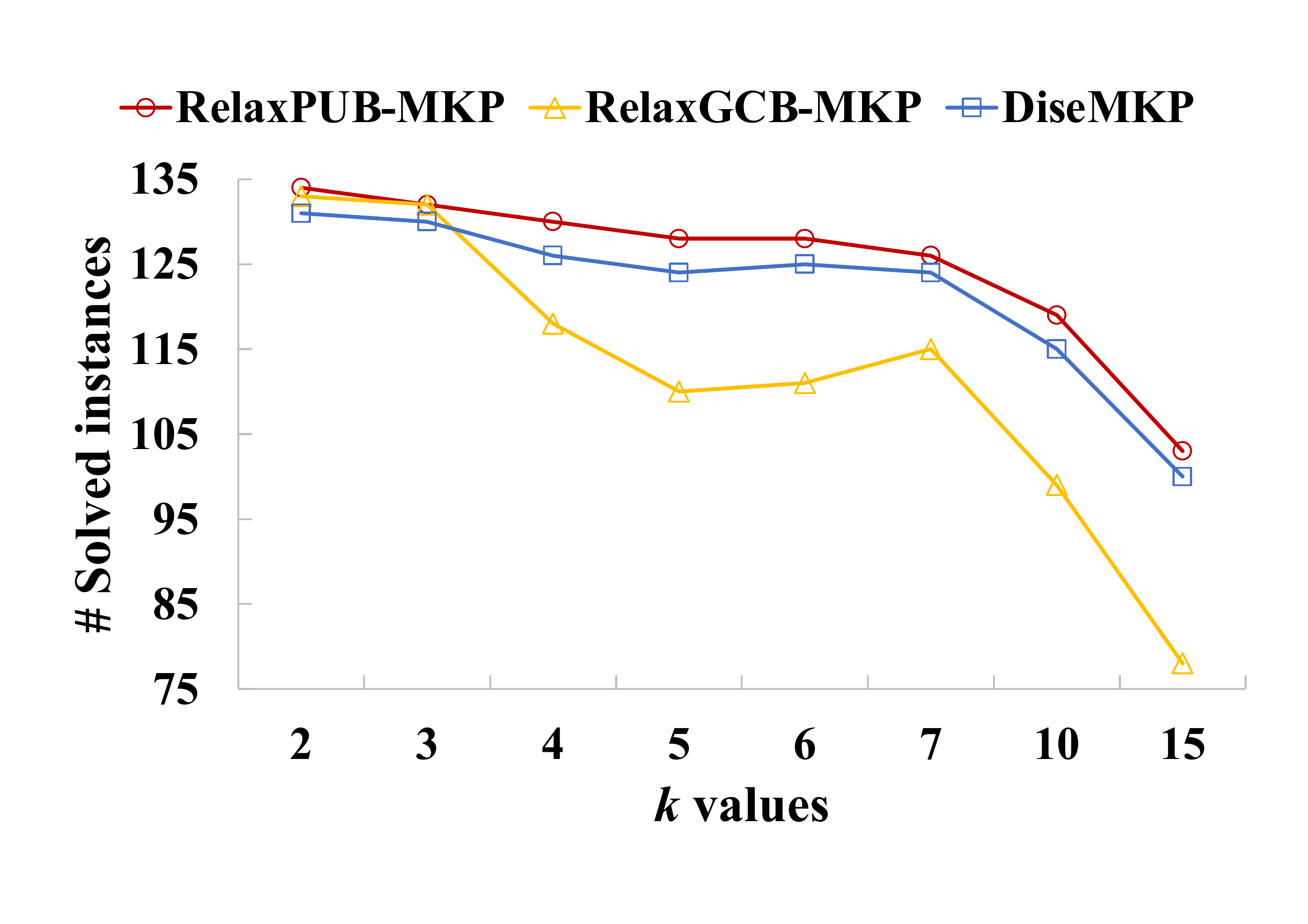}
\label{fig-DiseMKP-RealWorld}}
\subfigure[Comparison with KPLEX\vspace{-1em}]{
\includegraphics[width=0.49\columnwidth]{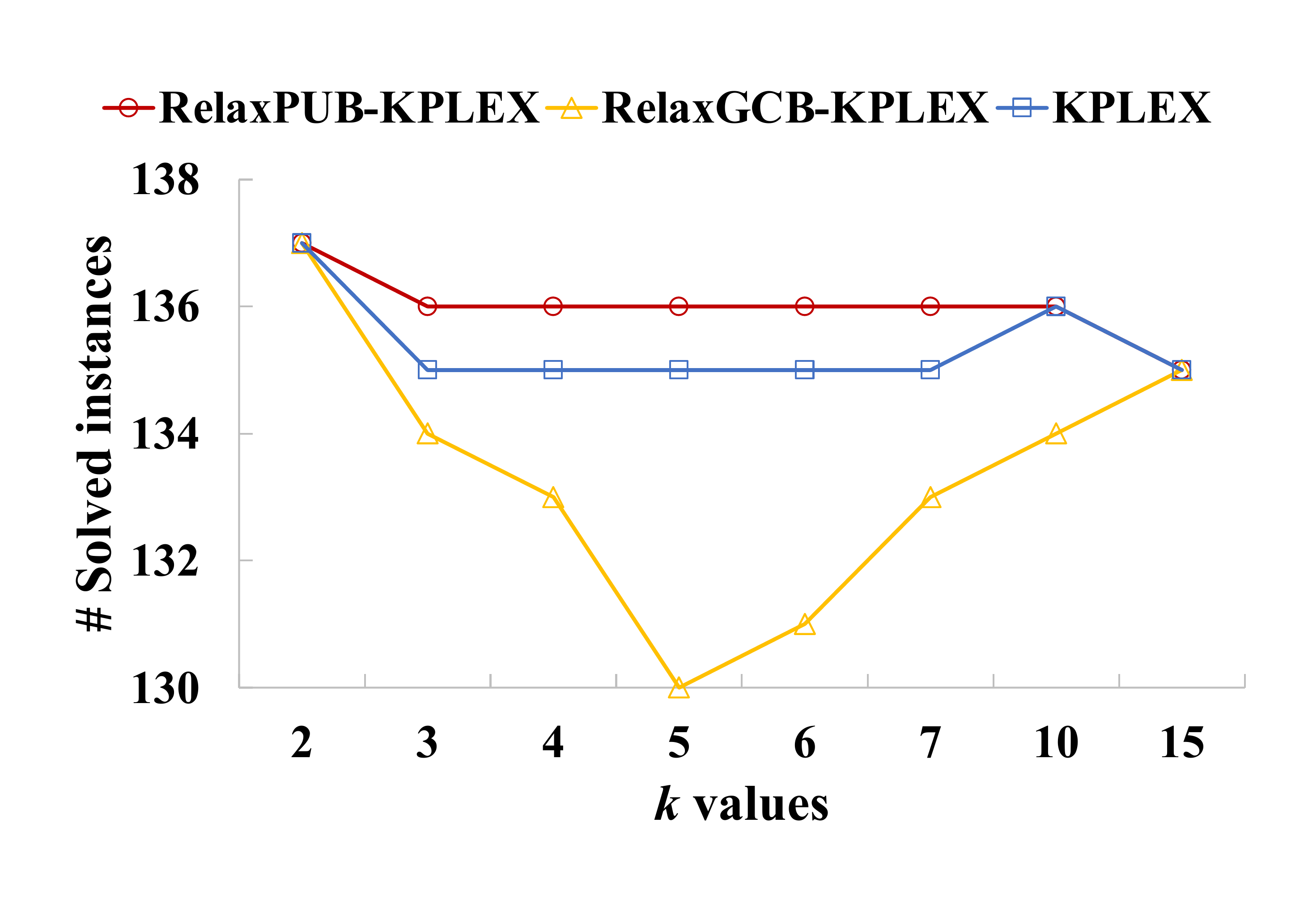}
\label{fig-KPLEX-RealWorld}}
\vspace{-1em}\caption{Comparisons on the sparse Real-world benchmark.
For the baselines, Maplex is based on GCB 
while the other three on PUB
\vspace{-0.5em}}
\label{fig-RealWorld}
\end{figure*}

\begin{table*}[!t]
\centering
\footnotesize
\resizebox{\linewidth}{!}{
\begin{tabular}{c|l|lllll|lllll|lllll|lllll} \toprule
\multicolumn{1}{c|}{\multirow{2}{*}{$k$}} & \multicolumn{1}{l|}{\multirow{2}{*}{\hspace{-0.3em}Instance}} & \multicolumn{3}{c}{RelaxPUB-Maplex}      & \multicolumn{2}{c|}{Maplex} & \multicolumn{3}{c}{RelaxPUB-kPlexS}      & \multicolumn{2}{c|}{kPlexS} & \multicolumn{3}{c}{RelaxPUB-MKP}          & \multicolumn{2}{c|}{DiseMKP} & \multicolumn{3}{c}{RelaxPUB-KPLEX}       & \multicolumn{2}{c}{KPLEX} \\
\multicolumn{1}{l|}{} & \multicolumn{1}{c|}{}                          & \textit{Tree}           & \textit{Time}           & \textit{Percent}  & \textit{Tree}         & \textit{Time}         & \textit{Tree}           & \textit{Time}           & \textit{Percent}  & \textit{Tree}         & \textit{Time}         & \textit{Tree}           & \textit{Time}           & \textit{Percent}   & \textit{Tree}          & \textit{Time}         & \textit{Tree}           & \textit{Time}           & \textit{Percent}  & \textit{Tree}        & \textit{Time}        \\ \hline
\multirow{9}{*}{2}  & \hspace{-0.3em}brock200-3                                     & \textbf{9.432} & \textbf{10.06} & 49.7\% & 790.8        & 107.8        & \textbf{8.343} & \textbf{292.0} & 49.9\% & 52.35        & 1,615        & \textbf{7.393} & \textbf{21.93} & 46.2\% & 85.88         & 26.06        & \textbf{27.43} & \textbf{34.63} & 79.8\% & 225.2       & 80.52       \\
                      & \hspace{-0.3em}brock200-4                                     & \textbf{24.99} & \textbf{30.07} & 50.2\% & 4015         & 576.9        & \textbf{23.84} & \textbf{695.8} & 49.7\% & -            & -            & \textbf{12.60} & \textbf{47.20} & 49.8\% & 279.6         & 103.1        & \textbf{68.81} & \textbf{84.81} & 78.8\% & 780.2       & 250.0       \\
                      & \hspace{-0.3em}C125.9                                         & \textbf{88.36} & \textbf{186.3} & 66.3\% & -            & -            & \textbf{24.56} & \textbf{162.8} & 71.8\% & -            & -            & \textbf{15.32} & \textbf{69.65} & 70.5\% & -             & -            & \textbf{29.61} & \textbf{37.93} & 88.0\% & -           & -           \\
                      & \hspace{-0.3em}keller4                                        & \textbf{3.715} & \textbf{4.015} & 46.1\% & 1273         & 182.3        & \textbf{3.961} & \textbf{87.18} & 44.3\% & 92.45        & 1,238        & \textbf{2.666} & \textbf{8.016} & 42.0\% & 69.11         & 21.15        & \textbf{19.29} & \textbf{17.38} & 79.6\% & 562.6       & 105.3       \\
                      & \hspace{-0.3em}san200-0-9-1                                   & \textbf{0.004} & \textbf{0.051} & 93.0\% & -            & -            & \textbf{0.024} & \textbf{0.660} & 94.8\% & -            & -            & \textbf{0.003} & \textbf{0.105} & 95.2\% & -             & -            & \textbf{0.057} & \textbf{0.619} & 98.4\% & 64.21       & 27.33       \\
                      & \hspace{-0.3em}sanr200-0-7                                    & \textbf{96.72} & \textbf{135.8} & 49.5\% & -            & -            & \textbf{66.05} & \textbf{1,656} & 51.5\% & -            & -            & \textbf{40.59} & \textbf{153.0} & 49.9\% & 1130          & 475.4        & \textbf{171.7} & \textbf{206.6} & 79.5\% & 2681        & 752.0       \\
                      & \hspace{-0.3em}socfb-Duke14                                   & \textbf{0.579} & \textbf{2.313} & 78.7\% & 195.6        & 45.05        & \textbf{0.110} & \textbf{4.397} & 85.8\% & 1.046        & 36.58        & \textbf{0.213} & \textbf{2.121} & 83.4\% & 243.6         & 154.5        & \textbf{0.281} & \textbf{2.403} & 94.3\% & 1.598       & 2.957       \\
                      & \hspace{-0.3em}socfb-UF                                       & \textbf{0.253} & \textbf{3.217} & 88.3\% & -            & -            & \textbf{0.094} & \textbf{2.012} & 93.2\% & 0.190        & 3.310        & \textbf{0.069} & \textbf{2.800} & 91.5\% & 413.1         & 299.3        & \textbf{0.107} & \textbf{1.741} & 91.9\% & 0.234       & 1.850       \\
                      & \hspace{-0.3em}socfb-Uillinois                                & \textbf{0.229} & \textbf{7.991} & 89.3\% & -            & -            & \textbf{0.022} & \textbf{2.224} & 92.8\% & 0.023        & 2.628        & \textbf{0.168} & \textbf{4.138} & 81.0\% & 18.33         & 13.05        & \textbf{0.024} & \textbf{1.774} & 85.8\% & 0.027       & 1.845       \\ \hline
\multirow{9}{*}{3}  & \hspace{-0.3em}hamming6-2                                     & \textbf{393.2} & \textbf{399.6} & 57.6\% & -            & -            & \textbf{204.4} & \textbf{349.2} & 49.8\% & -            & -            & \textbf{137.4} & \textbf{234.1} & 49.1\% & 826.6         & 304.6        & \textbf{197.0} & \textbf{136.2} & 60.2\% & 1157        & 203.3       \\
                      & \hspace{-0.3em}MANN-a81                                       & \textbf{0.001} & \textbf{0.001} & 100\%  & -            & -            & \textbf{0.001} & \textbf{534.9} & 96.3\% & 0.001        & 556.1        & \textbf{0.001} & \textbf{62.67} & 98.3\% & 0.003         & 63.61        & \textbf{0.004} & \textbf{568.4} & 98.9\% & 0.406       & 605.2       \\
                      & \hspace{-0.3em}p-hat300-2                                     & \textbf{174.7} & \textbf{340.9} & 57.0\% & -            & -            & \textbf{123.6} & \textbf{1,565} & 57.2\% & -            & -            & \textbf{36.99} & \textbf{200.0} & 59.2\% & -             & -            & \textbf{401.9} & \textbf{470.9} & 75.3\% & 1293        & 529.0       \\
                      & \hspace{-0.3em}socfb-UF                                       & \textbf{82.41} & \textbf{254.3} & 87.3\% & -            & -            & \textbf{0.094} & \textbf{1.975} & 82.1\% & 0.324        & 4.172        & \textbf{6.950} & \textbf{40.60} & 86.7\% & 440.6         & 413.4        & \textbf{0.067} & \textbf{1.877} & 89.4\% & 0.079       & 2.104       \\
                      & \hspace{-0.3em}socfb-Indiana                                  & \textbf{2.437} & \textbf{8.069} & 89.1\% & 1002         & 308.6        & \textbf{0.007} & \textbf{1.722} & 82.8\% & 0.008        & 1.879        & \textbf{1.110} & \textbf{7.302} & 84.9\% & 587.8         & 391.5        & \textbf{0.006} & \textbf{1.436} & 89.9\% & 0.006       & 1.708       \\
                      & \hspace{-0.3em}soc-flixster                                   & \textbf{8.732} & \textbf{20.00} & 74.6\% & -            & -            & \textbf{0.725} & \textbf{8.152} & 73.5\% & 7.394        & 110.5        & \textbf{2.084} & \textbf{13.15} & 76.2\% & 118.0         & 85.04        & \textbf{0.280} & \textbf{3.982} & 84.7\% & 0.470       & 5.674       \\
                      & \hspace{-0.3em}soc-lastfm                                     & \textbf{1.384} & \textbf{6.018} & 54.8\% & 78.33        & 29.91        & \textbf{0.327} & \textbf{17.86} & 60.2\% & 0.785        & 43.87        & \textbf{0.729} & \textbf{9.750} & 52.4\% & 7.163         & 10.26        & \textbf{1.549} & \textbf{53.88} & 77.5\% & 2.819       & 95.76       \\
                      & \hspace{-0.3em}soc-slashdot                                   & \textbf{1.607} & \textbf{1.922} & 74.2\% & 2461         & 289.2        & \textbf{0.095} & \textbf{0.715} & 72.3\% & 0.299        & 2.825        & \textbf{0.245} & \textbf{0.910} & 76.9\% & 7.847         & 4.577        & \textbf{0.096} & \textbf{0.829} & 82.4\% & 0.131       & 0.766       \\
                      & \hspace{-0.3em}tech-WHOIS                                     & \textbf{24.96} & \textbf{72.23} & 85.3\% & -            & -            & \textbf{0.049} & \textbf{0.451} & 84.4\% & 0.134        & 1.647        & \textbf{0.384} & \textbf{2.514} & 89.8\% & 6.996         & 6.043        & \textbf{0.028} & \textbf{0.279} & 90.9\% & 0.034       & 0.483       \\ \hline
\multirow{6}{*}{6}  & \hspace{-0.3em}c-fat200-1                                     & \textbf{0.001} & \textbf{0.001} & 92.8\% & 0.003        & 0.001        & \textbf{0.001} & \textbf{0.001} & 80.6\% & 0.001        & 0.002        & \textbf{0.002} & \textbf{0.014} & 82.9\% & 0.002         & 0.018        & \textbf{0.001} & \textbf{0.001} & 66.0\% & 0.001       & 0.001       \\
                      & \hspace{-0.3em}san200-0-7-1                                   & \textbf{0.001} & \textbf{0.017} & 95.9\% & -            & -            & \textbf{0.001} & \textbf{0.040} & 93.3\% & 3.163        & 4.928        & \textbf{0.001} & \textbf{0.037} & 93.7\% & 0.329         & 0.177        & \textbf{0.001} & \textbf{0.043}          & 87.1\% & -           & -           \\
                      & \hspace{-0.3em}san200-0-7-2                                   & \textbf{0.001} & \textbf{0.012} & 80.1\% & 11.99        & 6.417        & \textbf{0.004} & \textbf{0.156} & 70.4\% & -            & -            & \textbf{0.005} & \textbf{0.066} & 68.8\% & -             & -            & \textbf{0.002} & \textbf{0.122}          & 81.8\% & -           & -           \\
                      & \hspace{-0.3em}socfb-Berkeley13                               & \textbf{0.078} & \textbf{1.514} & 94.4\% & 1780         & 249.5        & \textbf{0.001} & \textbf{0.890} & 50.0\% & 0.001        & 0.905        & \textbf{0.244} & \textbf{1.890} & 90.4\% & 4.509         & 4.586        & \textbf{0.001} & \textbf{0.828}          & 28.2\% & 0.001       & 0.867       \\
                      & \hspace{-0.3em}socfb-MIT                                      & \textbf{0.189} & \textbf{0.591} & 78.4\% & 15876        & 1684         & \textbf{0.002} & \textbf{0.317} & 70.7\% & 0.002        & 0.319        & \textbf{0.041} & \textbf{0.523} & 91.4\% & 2.220         & 1.764        & \textbf{0.002} & \textbf{0.271}          & 57.0\% & 0.002       & 0.340       \\
                      & \hspace{-0.3em}soc-gowalla                                    & \textbf{12.45} & \textbf{11.61} & 54.2\% & 779.5        & 120.1        & \textbf{0.003} & \textbf{0.430} & 63.4\% & 0.005        & 0.572        & \textbf{6.042} & \textbf{16.65} & 53.1\% & 59.68         & 43.12        & \textbf{0.004} & \textbf{0.199}          & 69.3\% & 0.004       & 0.242       \\ \hline
\multirow{3}{*}{10} & \hspace{-0.3em}bio-dmela                                      & \textbf{1.245} & \textbf{14.64} & 39.1\% & -            & -            & \textbf{0.004} & \textbf{0.067} & 48.7\% & 0.007        & 0.076        & \textbf{0.085} & \textbf{0.113} & 30.9\% & 4.615         & 0.221        & \textbf{0.019} & \textbf{0.037} & 57.1\% & 0.030       & 0.038       \\
                      & \hspace{-0.3em}ia-enron-large                                 & \textbf{5.779} & \textbf{4.850} & 55.4\% & 156.3        & 24.10        & \textbf{0.001} & \textbf{0.103} & 76.3\% & 0.001        & 0.115        & \textbf{14.80} & \textbf{24.89} & 39.6\% & 266.7         & 178.1        & \textbf{0.002} & \textbf{0.064} & 72.1\% & 0.002       & 0.065       \\
                      & \hspace{-0.3em}tech-RL-caida                                  & \textbf{0.662} & \textbf{0.803} & 46.9\% & 18.52        & 4.693        & \textbf{0.001} & \textbf{0.251} & 65.9\% & 0.001        & 0.258        & \textbf{0.017} & \textbf{0.363} & 73.2\% & 1.160         & 0.500        & \textbf{0.001} & \textbf{0.067} & 83.3\% & 0.001       & 0.068       \\ \hline
\multirow{3}{*}{15} & \hspace{-0.3em}C125-9                                         & \textbf{152.7} & \textbf{507.5} & 79.6\% & -            & -            & \textbf{5.460} & \textbf{30.22} & 70.4\% & 12.21        & 96.15        & \textbf{2.712} & \textbf{16.10} & 82.6\% & 7.771         & 22.62        & \textbf{1.194} & \textbf{5.414} & 83.5\% & 17.47       & 40.32       \\
                      & \hspace{-0.3em}bio-diseasome                                  & \textbf{0.234} & \textbf{0.145} & 80.1\% & 1011         & 169.0        & \textbf{0.001} & \textbf{0.001} & 81.3\% & 0.001        & 0.001        & \textbf{0.055} & \textbf{0.024} & 57.7\% & 80.71         & 2.321        & \textbf{0.000} & \textbf{0.000} & -      & 0.000       & 0.001       \\
                      & \hspace{-0.3em}socfb-uci-uni                                  & \textbf{22.61} & \textbf{107.5} & 47.9\% & -            & -            & \textbf{0.019} & \textbf{49.43} & 58.1\% & 0.517        & 53.96        & \textbf{26.93} & \textbf{295.4} & 35.6\% & -             & -            & \textbf{0.185} & \textbf{6.775} & 62.7\% & 1.096       & 9.322       \\ \bottomrule
\end{tabular}}
\caption{Comparison on 30 representative MKP instances with $k = 2,3,6,10,15$. The search tree size is in $10^5$, and the time is in seconds. The percent indicates the percentage of the number of times RelaxGCB is used in RelaxPUB. Better results appear in bold.\vspace{-1em}}
\label{table-Instances}
\end{table*}

\subsection{Experimental Setup}
All the algorithms were implemented in C++ and run on a server using an AMD EPYC 7H12 CPU, running Ubuntu 18.04 Linux operation system. We test the algorithms on two public benchmarks that are widely used in the literature of the baselines, the 2nd DIMACS benchmark\footnote{http://archive.dimacs.rutgers.edu/pub/challenge/graph/\\benchmarks/clique/} that contains 80 (almost dense) graphs with up to 4,000 vertices and densities ranging from 0.03 to 0.99, and the Real-world benchmark\footnote{http://lcs.ios.ac.cn/\%7Ecaisw/Resource/realworld\%20\\graphs.tar.gz} that contains 139 real-world sparse graphs from the Network Data Repository~\cite{RA15}.

We choose the two sets of benchmarks because the 2nd DIMACS benchmark is also widely used to evaluate MCP, one of the most closely related problems to MKP, and the Real-world benchmark is widely used for analyzing various complex networks, one of the most important application areas of MKP. Moreover, the structures of the two benchmarks are distinct, helping evaluate the robustness of the algorithms.

For each graph, we generate 8 MKP instances with $k \in \{2,3,4,5,6,7,10,15\}$, and set the cut-off time to 1,800 seconds per instance, 
following the settings of the baselines.

\begin{figure*}[!t]
\centering
\subfigure[On Maplex]{
\includegraphics[width=0.78\columnwidth]{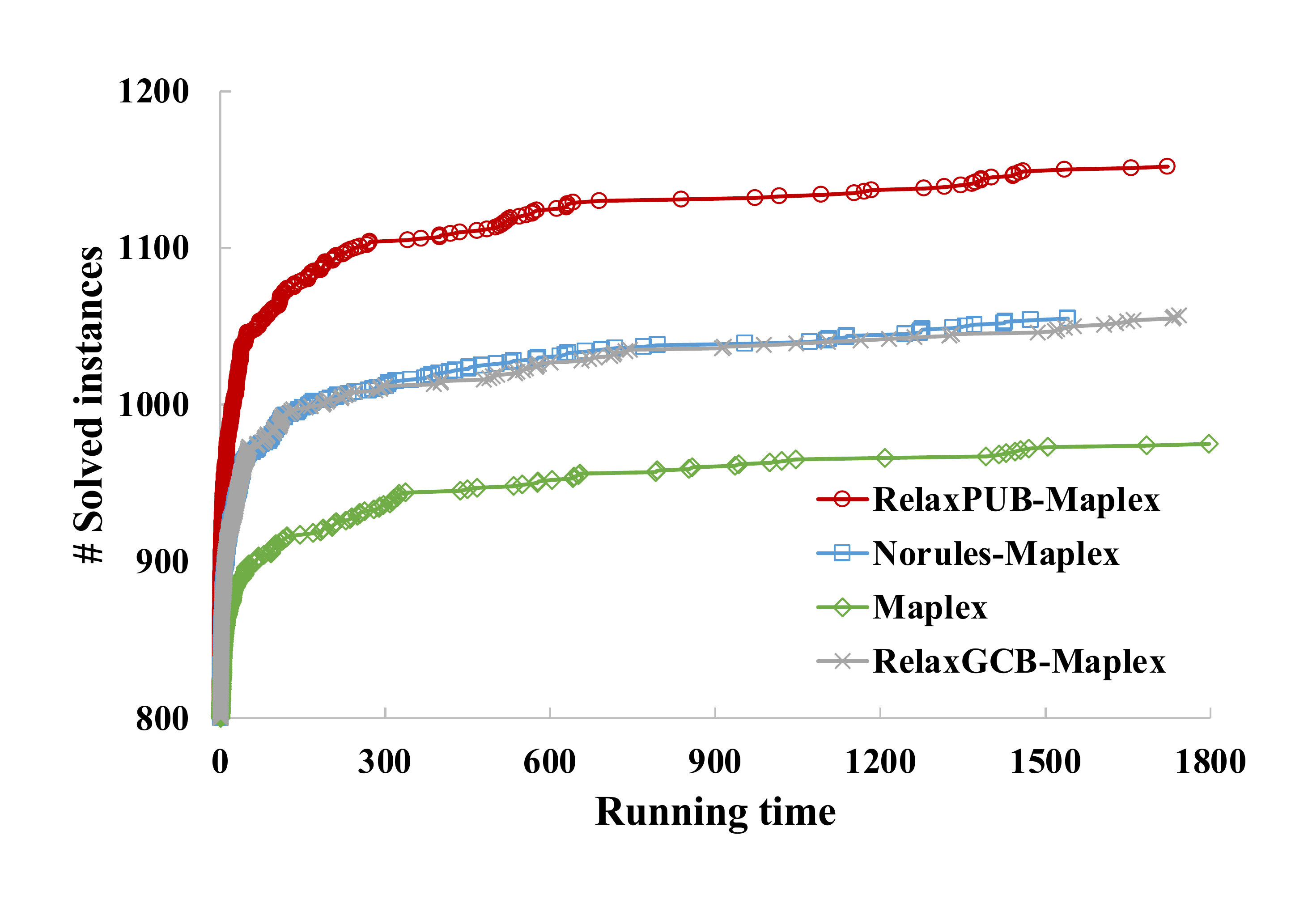}
\label{fig-Maplex-Ablation}}~~~~
\subfigure[On kPlexS]{
\includegraphics[width=0.78\columnwidth]{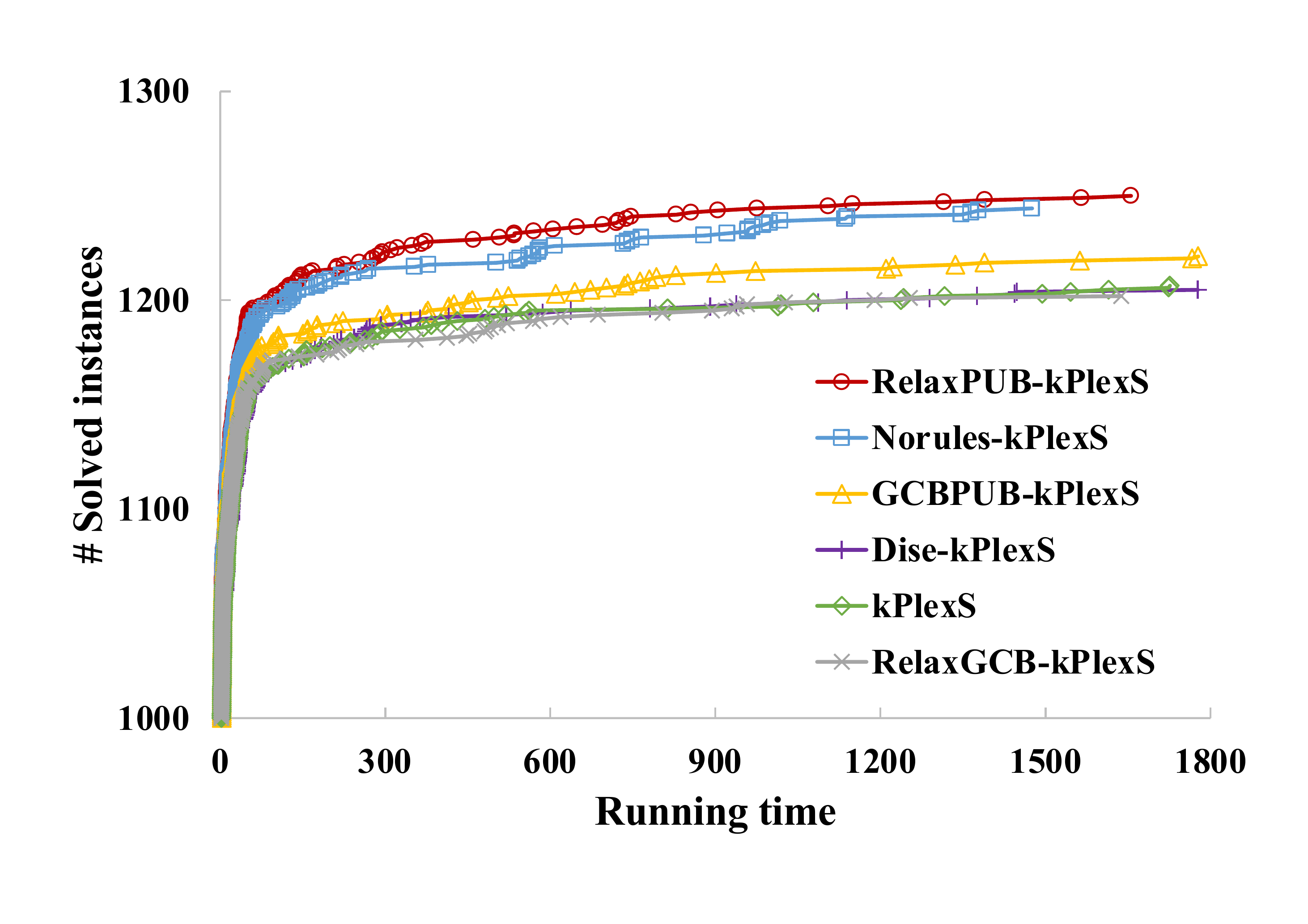}
\label{fig-kPlexS-Ablation}}
\subfigure[On DiseMKP]{
\includegraphics[width=0.78\columnwidth]{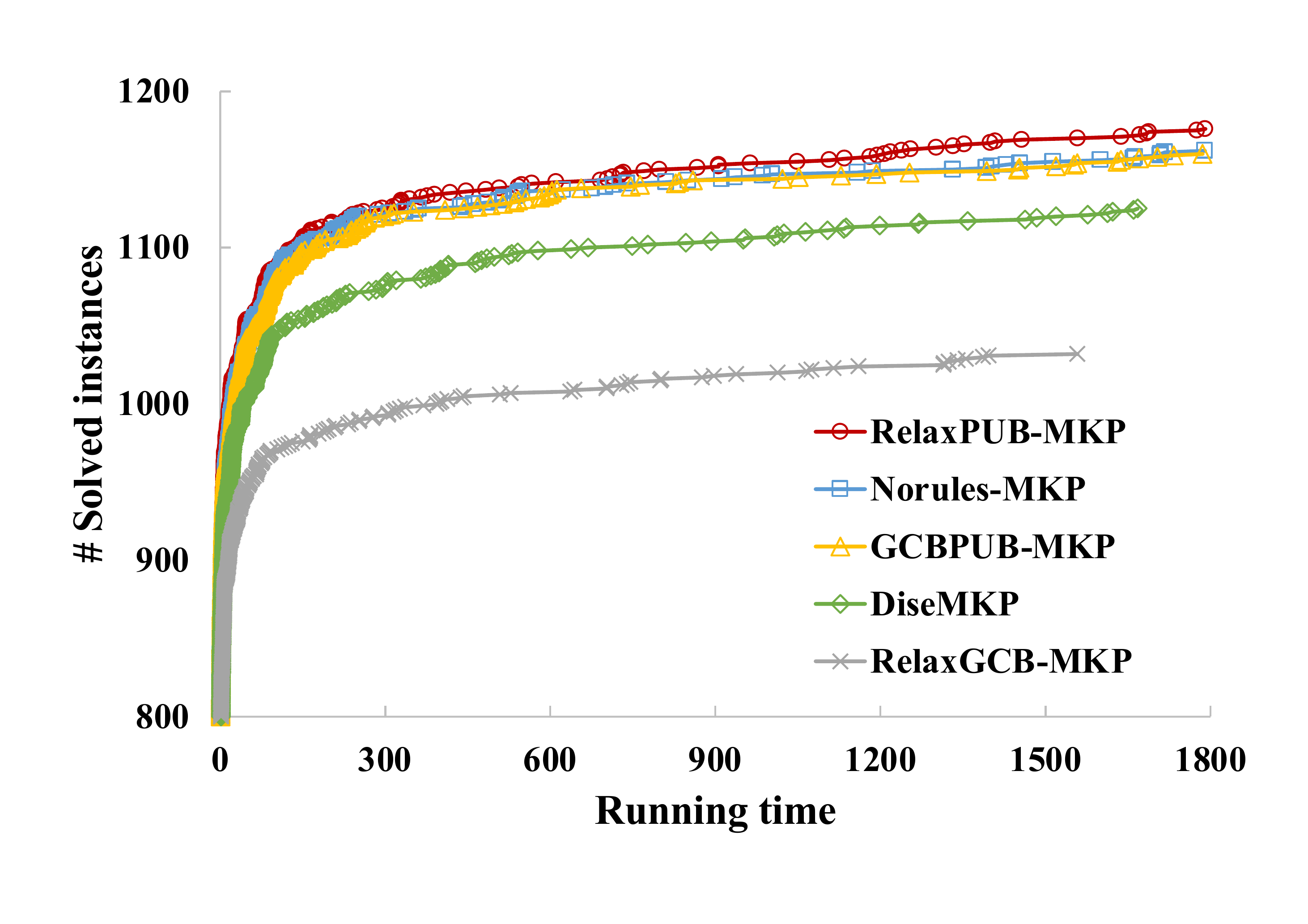}
\label{fig-DiseMKP-Ablation}}~~~~
\subfigure[On KPLEX]{
\includegraphics[width=0.78\columnwidth]{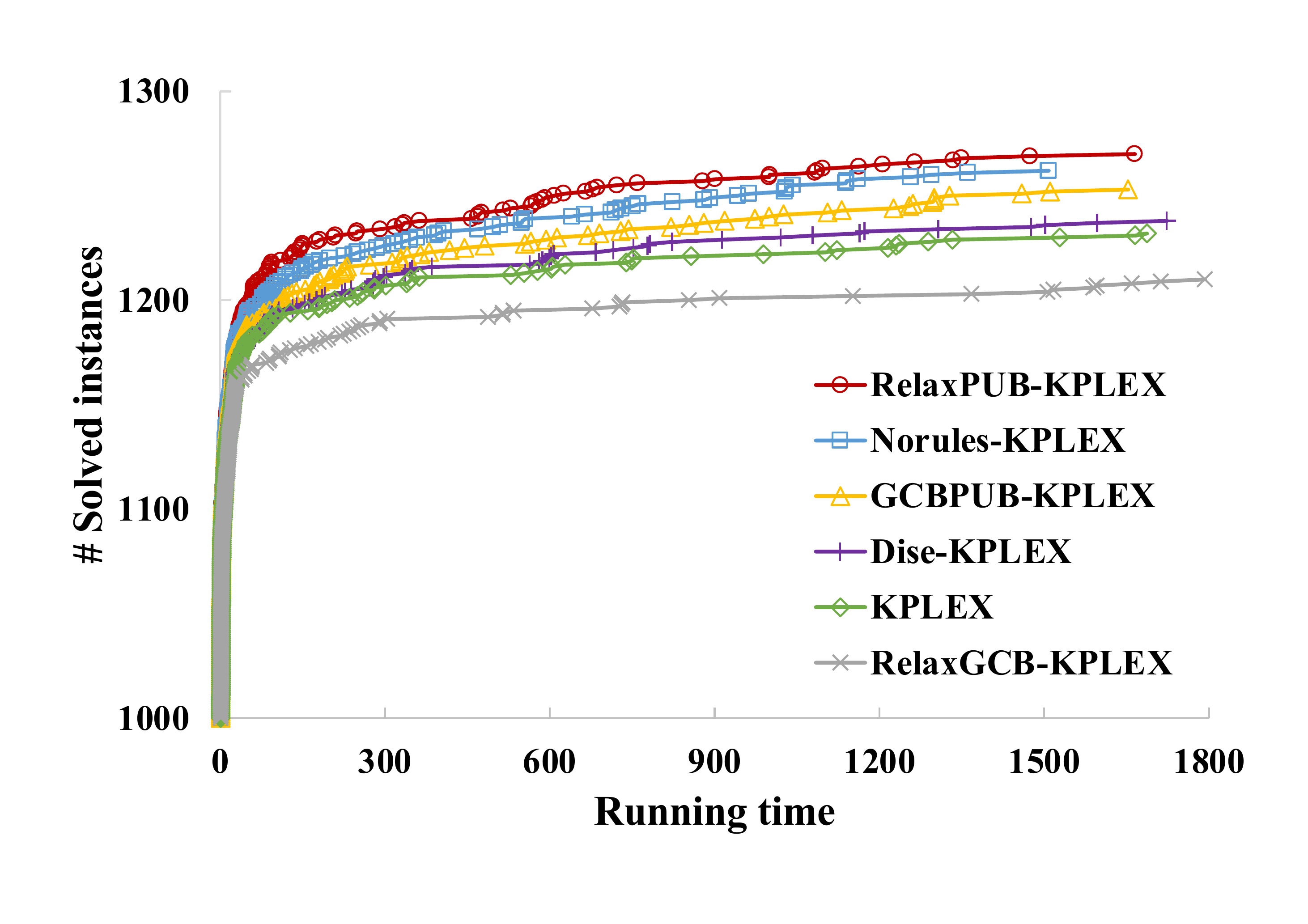}
\label{fig-KPLEX-Ablation}}
\vspace{-1em}\caption{Ablation studies on each baseline over all the tested instances.\vspace{-1em}}
\label{fig-Ablation}
\end{figure*}

\subsection{
Performance Evaluation}
The comparison results between the algorithms with our RelaxGCB and RelaxPUB bounds and the baselines in dense 2nd DIMACS and sparse Real-world benchmarks are summarized in Figures~\ref{fig-DIAMCS2} and~\ref{fig-RealWorld}, respectively. 
The results are expressed by the number of MKP instances solved by each algorithm within the cut-off time for different $k$ values. 
Note that Maplex only contains the GCB, and the other three baselines only contain the PUB. 
1) From (a) of the two figures, one can observe that our RelaxGCB significantly outperforms GCB. 
2) From (b) to (d)  of the two figures, one can observe that RelaxGCB is complementary to PUB. 
3) From all the figures, one can observe that our RelaxPUB makes full use of the complementarity of RelaxGCB and PUB, and significantly improves all the baselines in solving both dense and massive sparse graphs over diverse $k$ values, indicating its 
dominant performance over the state-of-the-art baselines, excellent generalization over different graphs, and strong robustness over diverse $k$ values.


With the increment of the $k$ values, the number of solved instances usually decays, because the number of vertices removed by graph reduction decreases accordingly, making the follow-up BnB calculation more complicated with a larger number of required branches. 
However, we notice that the number of solved instances increases for larger $k$ values (\eg, 10 and 15) on DIMACS2. This is because a $k$-plex with larger $k$-values can contain more vertices, and the DIMACS2 graphs are generally small and dense, making the most vertices in a graph contained. 


Following the convention of the baselines, we also present detailed results of the baselines and their improvements with RelaxPUB in solving 30 representative 2nd DIAMCS and Real-world instances with $k = 2,3,6,10,15$ in Table~\ref{table-Instances}. We report their running times in seconds (column \textit{Time}), the sizes of their entire search trees in $10^5$ (column \textit{Tree}) to solve the instances, and the percentage of the number of times RelaxGCB is selected and outperforms the DisePUB in RelaxPUB (column \textit{Percent}). Better results are highlighted in bold, and symbol `-' means the algorithm cannot solve the instance within the cut-off time.

The results show that for each pair of tested algorithms, our new upper bounds can help the baseline algorithm prune 
significantly more branches, reducing its search tree sizes by several orders of magnitude for instances that both can solve within the cut-off time. There are also many instances that the baseline algorithms cannot solve within the cut-off time, while the algorithms with our upper bounds can solve with few branches and much less calculation time. Moreover, we can observe that RelaxGCB contributes a lot in solving these instances, indicating again the complementarity of RelaxGCB and DisePUB. 




\subsection{Ablation Study}
In this subsection, we perform ablation studies to evaluate the effectiveness of the proposed TISUB and the two rules (see Lemmas~\ref{lemma-RelaxGCB1}, \ref{lemma-R1}, and \ref{lemma-R2})
in our proposed upper bounds. For the kPlexS, DiseMKP, and KPLEX baselines having the PUB, we generate a ``Norules'' variant, which uses our RelaxPUB without Rules 1 and 2, and a ``GCBPUB'' variant, which uses our RelaxPUB and replaces its RelaxGCB with the GCB in Maplex. 
Moreover, since the kPlexS and KPLEX algorithms use the previous PUB proposed in~\cite{JZX+21}, we apply the newest DisePUB to them and obtain two variants: Dise-kPlexS and Dise-KPLEX.
For the Maplex baseline that is only based on GCB, we generate a ``Norules'' variant, which uses our RelaxGCB without Rules 1 and 2.

We perform four groups of ablation studies based on each baseline over all the 1,752 instances, as summarized in Figure~\ref{fig-Ablation}. The results are expressed by the variation in the number of solved instances for each algorithm over the running time (in seconds). 
The results show that the ``GCBPUB'' variants are better than the baselines, indicating that combining coloring-based and partition-based upper bounds by the mechanism in RelaxPUB can make use of their complementarity. 
The ``Norules'' variants are better than the ``GCBPUB'' variants, indicating that TISUB is a significant improvement over GCB. The new algorithms with RelaxPUB are better than the ``Norules'' variants, indicating that our proposed two rules can further improve TISUB. Moreover, DisePUB can hardly improve kPlexS and KPLEX, indicating that the improvements of the RelaxPUB series over the baselines originate from RelaxGCB rather than using the newest DisePUB.


\section{Conclusion}
We proposed two new upper bounds for the Maximum $k$-plex Problem (MKP), termed RelaxGCB and RelaxPUB. RelaxGCB first tights the previous graph color bound (GCB) by considering the connectivity between vertices more thoroughly and relaxes the restrictive independent set structure by considering the relaxation property of MKP. 
RelaxPUB further combines RelaxGCB and an advanced partition-based upper bound in a novel way, making full use of their complementarity. 
We replaced the GCB in Maplex and the partition-based upper bounds in kPlexS, DiseMKP, and KPLEX with our two bounds, RelaxGCB and RelaxPUB, respectively, producing eight new BnB MKP algorithms. 
Experiments on both dense and sparse benchmark datasets show that RelaxGCB is a significant improvement over GCB, and RelaxPUB 
exhibits clearly priority over the baselines and exhibits excellent robustness over various $k$ values and high generalization capability over different graphs. 


\bibliographystyle{named}
\bibliography{ijcai24}

\end{document}